\newcommand{\R}{\mathbb{R}}
\newcommand{\Rext}{\mathbb{R}_{\text{ext}}}
\newcommand{\N}{\mathbb{N}}
\newcommand{\Dg}{\mathrm{Dg}}
\newcommand{\kPSS}{k_{\rm PSS}}
\newcommand{\kPWG}{k_{\rm PWG}}
\newcommand{\kSW}{k_{\rm SW}}
\newcommand{\SW}{{\rm SW}}
\newcommand{\SpD}{{\mathcal D}}
\newcommand{\SpfbD}{{\mathcal D}_{\rm f}^{\rm b}}
\newcommand{\SpND}{{\mathcal D}_N^{\rm b}}
\newtheorem{thm}{Theorem}[section]
\newtheorem{prop}[thm]{Proposition}
\newtheorem{lem}[thm]{Lemma}
\newtheorem{defin}[thm]{Definition}
\newtheorem{rem}[thm]{Remark}
\newcommand{\eqdef}{\ensuremath{\stackrel{\mbox{\upshape\tiny def.}}{=}}}
\title{Sliced Wasserstein Kernel for Persistence Diagrams}
\author{Mathieu Carri\`ere, Marco Cuturi, Steve Oudot}
\date{}
\begin{document}

\maketitle

\begin{abstract}
Persistence diagrams play a key role in topological data
analysis (TDA), in which they are routinely used to describe
 topological properties of complicated shapes. persistence diagrams
enjoy strong stability properties and have proven their utility in
various learning contexts.  They do not, however, live in a space
naturally endowed with a Hilbert structure and are usually compared
with non-Hilbertian distances, such as the bottleneck distance. To
incorporate persistence diagrams in a convex learning pipeline, several kernels have been
proposed with a strong emphasis on the stability of the resulting RKHS
distance w.r.t. perturbations of the persistence diagrams.  In this article, we use the
Sliced Wasserstein approximation of the Wasserstein distance to
define a new kernel for persistence diagrams, which is not only provably stable but
also discriminative (with a bound depending on the number of points in the persistence diagrams) 
w.r.t. the first diagram distance between persistence diagrams. 
We also demonstrate its practicality, by
developing an approximation technique to reduce kernel computation
time, and show that our proposal compares favorably to existing
kernels for persistence diagrams on several benchmarks.
\end{abstract}

\section{Introduction}

Topological Data Analysis (TDA) is an emerging trend in data
science, grounded on topological methods to design descriptors
for complex data---see e.g.~\cite{Carlsson09b} for an introduction to
the subject.  The descriptors of TDA can be used in various contexts,
in particular statistical learning and geometric inference, where they
provide useful insight into the structure of data.  Applications
of TDA can be found in a number of scientific areas, including
computer vision~\cite{Li14}, materials science~\cite{Hiraoka16}, and
brain science~\cite{Singh08}, to name a few.  The tools developed in
TDA are built upon persistent homology
theory~\cite{Edelsbrunner10,Oudot15}, and their main output is a
descriptor called {\em persistence diagram}, which encodes the
topology of a space at all scales in the form of a point cloud with
multiplicities in the plane $\R^2$---see Section~\ref{sec:persHom} for more details.

\paragraph{Persistence diagrams as features.} 
The main strength of persistence diagrams is their stability with respect to perturbations of the data~\cite{Chazal09c,Chazal13b}.
On the downside, their use in learning tasks is not straightforward.
Indeed, a large class of learning methods, such as SVM or PCA, requires
a Hilbert structure on the descriptors space, which is not the case
for the space of persistence diagrams. Actually, many simple operators of $\R^n$, such
as addition, average or scalar product, have no analogues in that
space. Mapping persistence diagrams to vectors in $\R^n$ or in some infinite-dimensional Hilbert space 
is one possible approach to facilitate their use in discriminative settings.
% fall into one of the two following solutions: either {\em
%  vectorize} the persistence diagrams, i.e. find a map that sends persistence diagrams to $\R^n$, or
%define a valid {\em kernel}, which is a generalization of scalar
%products, between them.
%See Section~\ref{sec:relWork} for more details.

%\section{Related Work}
%\label{sec:relWork}
%As mentioned above, the first way to use persistence diagrams in ML is to vectorize them. 
%The main goal of vectorization is to find a map that sends persistence diagrams to $\R^n$,
%while preserving stability. 
\paragraph{Related work.} A series of recent contributions have proposed kernels for persistence diagrams,
falling into two classes.  The first class of methods builds explicit
feature maps: one can, for instance, compute and sample functions extracted from
persistence diagrams~\cite{Bubenik15,Adams17,Robins16}; sort the entries of
the distance matrices of the persistence diagrams~\cite{Carriere15a}; treat the
points of the persistence diagrams as roots of a complex polynomial, whose coefficients are
concatenated~\cite{diFabio15}.
%Bubenik~\cite{Bubenik12} derived a mapping that send persistence diagrams
%to a piecewise-linear scalar function on the plane, called the {\em landscape},
%and sampled it to end up with a vector.
%In~\cite{Robins15}, the authors applied the same method to the so-called rank function, 
%which is a piecewise-constant function of the plane.
%In~\cite{Carriere15}, the authors used the matrix of distances between the
%PD points to get a vector in $\R^n$.
%Adams et al.~\cite{Adams15} applied a regular grid on the plane, then they integrated
%a sum of Gaussians centered on the PD points over each pixel, and concatenated
%them to end up with a vector in $\R^n$. %where $D$ is the number of pixels of the grid.
%Finally, in~~\cite{diFabio15}, the authors treated the PD points as roots of a complex
%polynomial, whose coefficients are concatenated to get a complex vector in $\mathbb{C}^n$.
%
The second class of methods, which is more relevant to our work, defines implicitly feature maps by focusing instead on building kernels for persistence diagrams. For
instance, \cite{Reininghaus15} use
%symmetrized the persistence diagrams, and then defined a function
%from the PD points, that acts as 
 solutions of the heat differential equation in the plane and compare them
%The solution of this equation at time $t$ gives a map $\Phi_t$ from persistence diagrams to $L^2(\R^2)$.
%The {\em Persistence Scale Space} (PSS) kernel
%between persistence diagrams is then obtained by computing 
with the usual $L^2(\R^2)$ dot product.
%$$k_{\rm PSS}(D,D')=\ \ <\Phi_t(D),\Phi_t(D')>_{L^2(\R^2)}.$$
~\cite{Kusano16} handle a persistence diagram as a discrete measure on the plane,
%weighted by the diagonal distances,
%and send this measure into a Hilbert space $\mathcal{H}_\rho$ through mean kernel 
and follow by using kernel mean embeddings with Gaussian kernels---see
Section~\ref{sec:expe} for precise definitions.
% $k_\rho:\R^2\times\R^2\rightarrow\R$ with parameter $\rho >0$.
%Then, the Persistence Weighted Gaussian (PWG) kernel is defined as either the linear kernel of $\mathcal{H}_\rho$, 
%that is its scalar product, or another Gaussian kernel with parameter $\tau>0$ computed with the distances in $\mathcal{H}_\rho$.
%Let $\Psi_G$ denote this embedding.
%The {\em Persistence Weighted Gaussian} (PWG) kernel is then defined by:
%$$k_{\rm PWG}(D,D')=\ \ <\Psi_G(D),\Psi_G(D')>_{\mathcal{H}_G}.$$
%Note that the authors also use another Gaussian kernel in $\mathcal{H}$ to test their kernel.
Both kernels are provably {\em stable}, in the sense that the metric they induce in their respective reproducing kernel Hilbert space (RKHS) 
is bounded above by the distance between persistence diagrams. 
%or the Hausdorff distance $d_{\rm H}$ between the data sets on which the persistence diagrams are computed (PWG kernel). 
Although these kernels are injective, there is no evidence that their induced RKHS distances are discriminative and therefore follow 
the geometry of the diagram distances, which are more widely accepted distances to compare persistence diagrams.
%In this work, we prove such an equivalence, and show its practical efficiency on several applications.

More generally, one of the reasons why the derivation of kernels for persistence diagrams is not  
%using kernel methods to handle persistence diagrams is not 
straightforward is that the natural metrics between persistence diagrams, the {\em diagram distances}
are not negative semi-definite. Indeed, these diagram distances are very similar to the 
{\em Wasserstein distance}~\cite[\S6]{Villani09} between probability measures, which is not negative
semi-definite. However, a relaxation of this metric called the {\em Sliced Wasserstein distance}~\cite{Rabin11}
has recently been shown to be negative semi-definite and was used to derive kernels for probability distributions
in~\cite{Kolouri16}.    

\paragraph{Contributions.} In this article, 
%we slightly modify the diagram distances in order to use kernel methods to 
%handle persistence diagrams. More specifically,  
we use the Sliced Wasserstein
distance of~\cite{Rabin11} to define a new kernel for persistence diagrams, which we prove
to be both stable and discriminative. Specifically, we provide distortion bounds on the Sliced Wasserstein distance that quantify its ability to 
mimic the diagram distances between persistence diagrams. This
is in contrast to other kernels for persistence diagrams, which only focus on
stability. We also propose a simple approximation algorithm
to speed up the computation of that kernel, confirm experimentally its discriminative power and show 
that it outperforms experimentally both proposals of \cite{Kusano16} and \cite{Reininghaus15} in several supervised classification problems.

\section{Background}

\subsection{Persistent Homology}
\label{sec:persHom}

\newcommand{\dg}{\operatorname{Dg}}
\newcommand{\distb}{\operatorname{d_b}}
\newcommand{\cost}{\operatorname{c}}

\begin{figure*}
\centering
\subfigure[]{
\centering
\label{fig::sublevel_sets}
 \includegraphics[height=4.2cm]{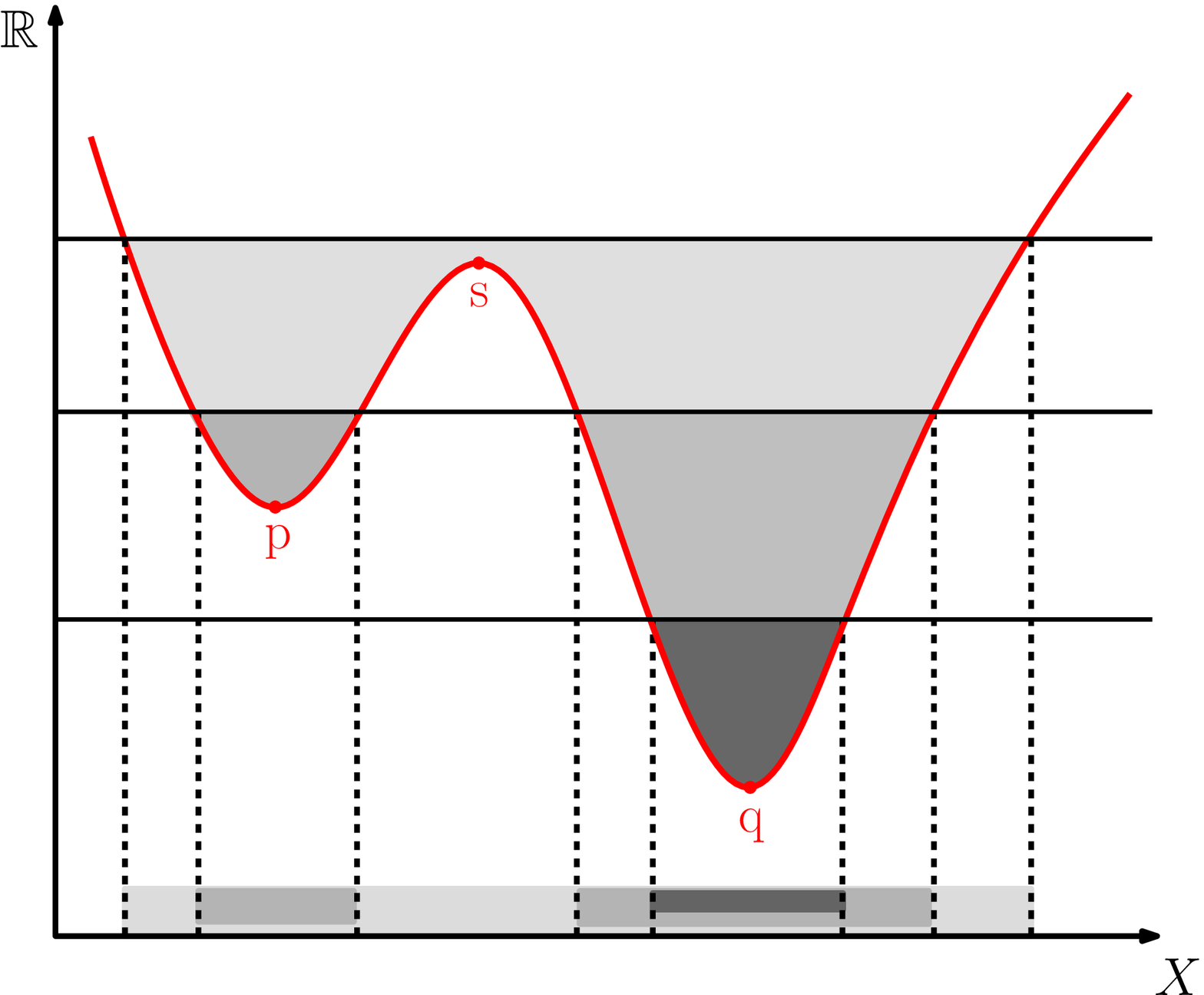}
}
\subfigure[]{\label{fig::barcode1}
\includegraphics[height=4.2cm]{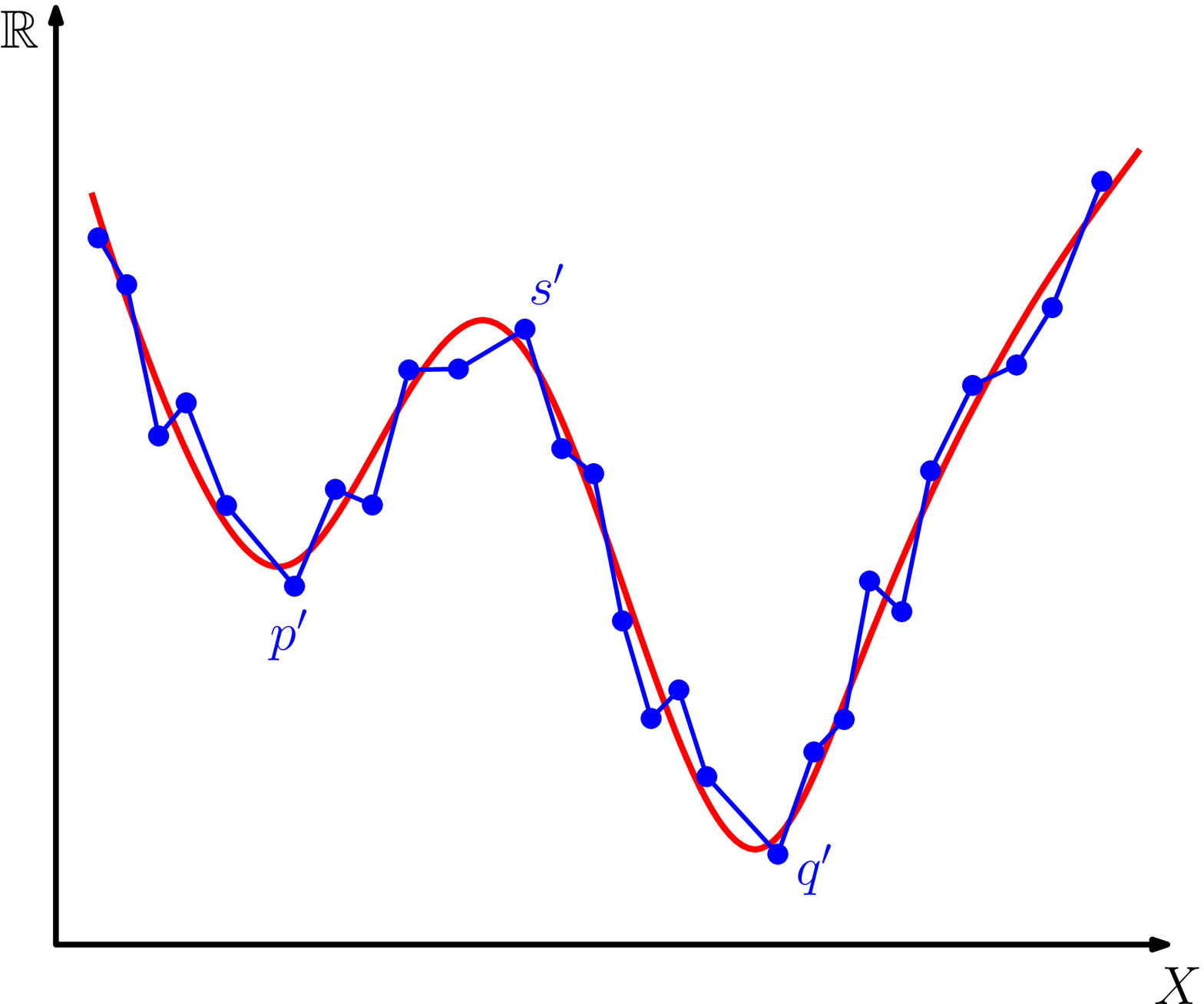}
}
\subfigure[]{\label{fig::barcode2}
\includegraphics[height=4.2cm]{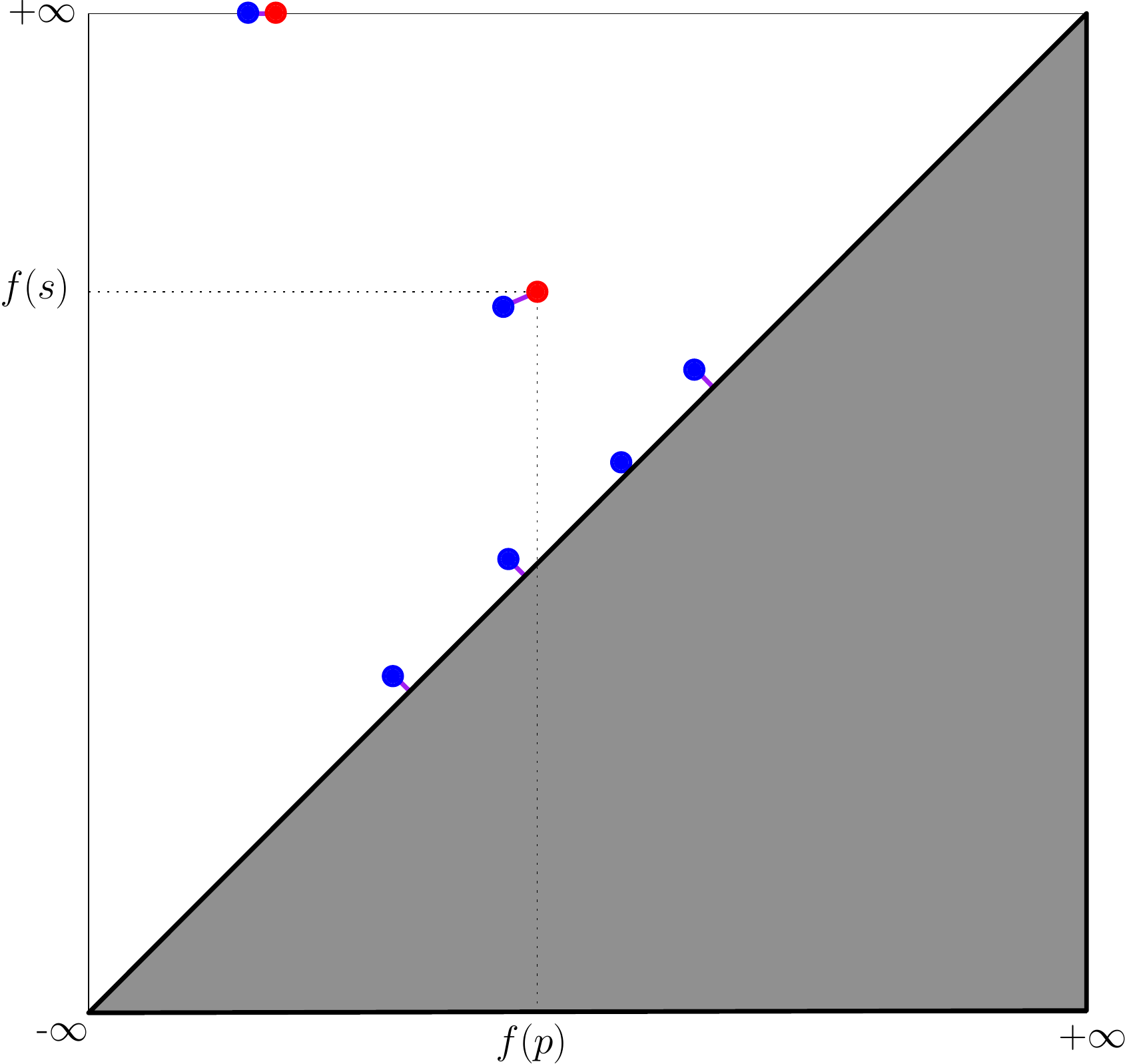}
}
\caption{\label{fig:barcode}Sketch of persistent homology: 
(a)~the horizontal lines are the boundaries of sublevel sets $f((-\infty,t])$, which are colored in decreasing shades of grey.
The vertical dotted lines are the boundaries of their different connected components.
For instance, a new connected component is created in the sublevel set $f^{-1}((-\infty,t])$ 
when $t=f(p)$, and it is merged (destroyed) when $t=f(s)$; its lifespan is represented by a copy of the point with coordinates
$(f(p), f(s))$ in the persistence diagram of~$f$ (Figure~(c)); 
(b)~a piecewise-linear approximation $g$ (blue) of the function $f$ (red) from sampled values; 
(c)~superposition of $\Dg(f)$ (red) and $\Dg(g)$ (blue), showing the partial matching of minimum cost (magenta) between the
two persistence diagrams.}
\end{figure*}

Persistent homology is a technique inherited from algebraic
topology for computing stable descriptors on real-valued
functions. Given $f:X\to\R$ as input, persistent homology
outputs a planar point set with multiplicities, called the 
{\em  persistence diagram} of~$f$ and denoted by $\Dg(f)$.
Note that the coordinates of the points belong to the extended real line $\Rext=\R\cup\{+\infty\}$. 
See Figure~\ref{fig:barcode} for an example. 
To understand the meaning of each point in this diagram, it suffices to know that, to compute $\Dg(f)$, 
persistent homology considers the family of {\em sublevel sets} of
$f$, i.e. the sets of the form $f^{-1}((-\infty, t])$ for $t\in\R$,
  and it records the {\em topological events} (e.g. creation or merge
  of a connected component, creation or filling of a loop, void, etc.)
  that occur in $f^{-1}((-\infty, t])$ as $t$ ranges from $-\infty$ to
    $+\infty$.  Then, each point $p\in \Dg(f)$ represents the lifespan
    of a particular {\em topological feature} (connected component,
    loop, void, etc.), with its creation and destruction times as
    coordinates. See again Figure~\ref{fig:barcode} for an
    illustration.

For the interested reader, we point out that the mathematical tool
used by persistent homology to track the topological events in the
family of sublevel sets is {\em homological algebra}, which turns the
parametrized family of sublevel sets into a parametrized family of
vector spaces and linear maps. Computing persistent homology then
boils down to computing a family of bases for the vector spaces, which
are compatible with the linear maps. It will be no surprise to the
reader familiar with matrix reduction techniques that the simplest way
to implement the compatible basis computation is using Gaussian elimination. 

\paragraph*{Distance between persistence diagrams.}
We now define the {\em $p$th diagram distance} between persistence diagrams.
%\footnote{In the literature on persistence diagrams, these distances are often called
%the {\em Wasserstein distances}. In order to avoid confusion with the Wasserstein distances between measures, we call them
%{\em bottleneck distances} in this article.} for persistence diagrams.
Let $p\in\mathbb{N}$ and $\Dg_1,\Dg_2$ be two persistence diagrams. 
Let $\Gamma:\Dg_1\supseteq A\rightarrow B\subseteq \Dg_2$ be a {\em partial bijection} between $\Dg_1$ and $\Dg_2$.
%subset of $\Dg_1\times\Dg_2$ such that for any $x_1\in\Dg_1$ (resp. $y_2\in\Dg_2$),
%there is at most one $x_2\in\Dg_2$ (resp. $y_1\in\Dg_1$) such that $(p_1,p_2)\in\Gamma$ 
%(resp. $(q_1,q_2)\in\Gamma$). 
Then, for any point $x\in A$, the $p$-{\em cost} of $x$ is defined as $c_p(x)=\|x-\Gamma(x)\|_\infty^p$,
and for any point $y\in(\Dg_1\sqcup \Dg_2)\setminus (A\sqcup B)$, the $p$-{\em cost} of $y$ is defined as 
$c'_p(y)=\|y-\pi_\Delta(y)\|_\infty^p$, where $\pi_\Delta$ is the projection 
onto the diagonal~$\Delta=\{(x,x) : x\in\R\}$.
The cost $\cost_p(\Gamma)$ is defined as:
$\cost_p(\Gamma)=(\sum_x c_p(x) + \sum_y c'_p(y))^{1/p}.$
%\sum_{(p_1,p_2)\in\Gamma}\|p_1-p_2\|^q_\infty + \sum_{q\in\Gamma}\|p_1-p_2\|^q_\infty$ 
%where $\delta_1(p) = \|p_1-p_2\|^q_\infty$ if $\exists p_2\in\Dg_2$ such that $(p_1,p_2)\in\Gamma$,
%and $\delta_1(p) =\|p_1-\pi_\Delta(p_1)\|^q_\infty$ otherwise, 
%The definition of $\delta_2$ is similar.
We then define the {\em $p$th diagram distance}
 $d_p$ as the cost of the best partial bijection: %between the persistence diagrams:
$$d_p(\Dg_1,\Dg_2) = \inf_\Gamma \cost_p(\Gamma).$$
In the particular case $p=+\infty$, the cost of $\Gamma$ is defined as 
$\cost(\Gamma)=\max\{\max_x c_1(x) + \max_y c'_1(y)\}.$
The corresponding distance $d_\infty$ is often called the {\em bottleneck distance}.
One can show that $d_p\rightarrow d_\infty$ when $p\rightarrow +\infty$.
A fundamental property of persistence diagrams is their stability with
respect to (small) perturbations of their originating functions.
Indeed, the {\em stability theorem}~\cite{Bauer13b,Chazal09a,Chazal16a,Cohen07}
asserts that 
%the operator sending a map $f:X\to\R$ to $\dg f$ is $1$-Lipschitz.  Indeed, 
for any
$f,g:X\to\R$, we have
%the $\infty$-{\em bottleneck distance} between persistence diagrams is no more than $\|f-g\|_\infty$.
%
\begin{equation}\label{eq:PD_stab}
d_\infty(\Dg(f),\, \Dg(g))\leq \|f-g\|_\infty,
\end{equation}

In practice, persistence diagrams can be used as descriptors for data
via the choice of appropriate filtering functions~$f$, e.g.  distance
to the data in the ambient space, eccentricity, curvature, etc. The
main strengths of the obtained descriptors are: (a) to be provably stable as
 mentioned previously; (b) to be invariant under reparametrization of the
data; and (c) to encode information about the topology of the data,
which is complementary and of an essentially different nature compared
to geometric or statistical quantities.
These properties have made persistence diagrams useful in a variety of
contexts, including the ones mentioned in the introduction of the
paper.
For further details on persistent homology and on applications of
persistence diagrams, the interested reader can refer
e.g. to~\cite{Oudot15} and the references therein.

\paragraph{Notation.} Let $\SpD$ be the space of persistence diagrams with at most countably many points,
%$\SpfD$ be the space of finite persistence diagrams,
$\SpfbD$ be the space of finite and bounded persistence diagrams, and
$\SpND$ be the space of bounded persistence diagrams with less than $N$ points.
%$\SpLD$ be the space of finite persistence diagrams included in $[-L,L]$,
%and $\SpNLD$ be the space of persistence diagrams with less than $N$ points included in $[-L,L]$. 
Obviously, we have the following sequence of (strict) inclusions: $\SpND\subset\SpfbD\subset\SpD$.

\subsection{Kernel Methods}
\label{sec:kernelMethods}

\textbf{Positive Definite Kernels.} Given a set $X$, a function $k:X\times X\to\R$ 
is called a {\em positive definite kernel} if for all integers $n$, for all families $x_1,\cdots,x_n$ of points in $X$, 
the matrix $[k(x_i,x_j)]_{i,j}$ is itself positive semi-definite. For brevity we will refer to positive definite 
kernels as kernels in the rest of the paper.
%, i.e. $\forall a_1,...,a_n \in \R,\sum_{i,j}a_ia_jk(x_i,x_j)\geq 0$. 
It is known that kernels generalize scalar products, in the sense that, given a kernel $k$, there exists a Reproducing Kernel Hilbert Space
(RKHS) $\mathcal{H}_k$ and a {\em feature map} $\phi:X\to\mathcal{H}_k$ such that $k(x_1,x_2)=\langle \phi(x_1),\phi(x_2)\rangle_{\mathcal{H}_k}$.
A kernel $k$ also induces a distance $d_k$ on $X$ that can be computed as the Hilbert norm of the difference between two embeddings:
$$d_k^2(x_1,x_2)\eqdef k(x_1, x_1) + k(x_2, x_2) -2\,k(x_1, x_2).$$ 
%A kernel is then called {\em stable}
%if its induced distance is bounded from above by the distance in $X$.
We will be particularly interested in this distance, since one of the goals we will aim for will be that of designing a kernel 
$k$ for persistence diagrams such that $d_k$ has low distortion with respect to the first diagram distance $d_1$.

\paragraph*{Negative Definite and RBF Kernels.} A standard way to construct a kernel is to exponentiate the negative of a 
Euclidean distance. Indeed, the Gaussian kernel for vectors with parameter $\sigma>0$ does follow that template approach: 
$k_\sigma(x,y)={\rm exp}\left(-\frac{\|x-y\|^2}{2\sigma^2}\right)$. An important theorem of~\cite{Berg84} (Theorem 3.2.2, p.74) states that 
such an approach to build kernels, namely setting 
$$k_\sigma(x,y)\eqdef {\rm exp}\left(-\frac{f(x,y)}{2\sigma^2}\right),$$
for an arbitrary function $f$ can only yield a valid positive definite kernel for all $\sigma>0$ 
if and only if $f$ is a \emph{conditionally negative definite} function, namely that, for all integers $n$, 
 for all $x_1,\cdots,x_n \in X$, and for all $ a_1,\cdots,a_n \in \R$ such that $\sum_ia_i=0$, 
one has $\sum_{i,j}a_ia_jf(x_i,x_j)\leq 0$.

Unfortunately, as observed experimentally in Appendix~A of~\cite{Reininghaus14}, $d_1$  is not conditionally negative definite (in practice, it only suffices to sample a family of point 
clouds to observe experimentally that more often than not the inequality above will be violated for a particular weight vector $a$).
Actually, as observed in~\cite{Padellini17}, even the square of the diagram distances $d_p$ cannot be used to define Gaussian kernels.
Indeed, it was noted in Theorem~6 of~\cite{Feragen15} that, if the square of a distance $d$ defined on a geodesic space $X$ is conditionally negative definite,
then the metric space $X$ is flat, or CAT$(0)$. However, since the metric space $\SpD$, equipped with $d_p$, $p\in\N\cup\{+\infty\}$, is not CAT$(k)$ for any
$k>0$---which is due to the non-uniqueness of geodesics, see~\cite{Turner14}---it follows that $d_p^2$ is not conditionally negative definite.
 
In this article, we use an approximation of $d_1$ with the {\em Sliced Wasserstein distance}, which is provably 
conditionally negative definite, and we use it to define a RBF kernel that can be easily tuned thanks to its bandwidth parameter $\sigma$.

\subsection{Wasserstein distance for unnormalized measures on $\mathbb{R}$}
\label{sec:wasserstein}
The Wasserstein distance~\cite[\S6]{Villani09} is a distance between probability measures. 
For reasons that will become clear in the next section, we will focus on a variant of that distance: 
the 1-Wasserstein distance for \emph{nonnegative}, not necessarily normalized, measures on the real line~\cite[\S2]{Santambrogio15}. 
Let $\mu$ and $\nu$ be two nonnegative measures on the real line such that 
$|\mu|=\mu(\mathbb{R})$ and $|\nu|=\nu(\mathbb{R})$ are equal to the same number $r$. 
We define the three following objects:

\begin{align}
&\mathcal{W}(\mu,\nu)=\inf_{P\in\Pi(\mu,\nu)} \iint_{\mathbb{R}\times\mathbb{R}} |x-y| P({\rm d}x,{\rm d}y)\label{eq:optimal}\\
&\mathcal{Q}_r(\mu,\nu)=r \int_{\mathbb{R}} | M^{-1}(x)- N^{-1}(x)| {\rm d}x\label{eq:quantile}\\
&\mathcal{L}(\mu,\nu)=\inf_{f\in 1-\text{Lipschitz}}\int_{\mathbb{R}} f(x) [\mu({\rm d}x)-\nu({\rm d}x)]\label{eq:Kanto}
\end{align}
where $\Pi(\mu,\nu)$ is the set of measures on $\mathbb{R}^2$ with marginals $\mu$ and $\nu$, 
and $M^{-1}$ and $N^{-1}$ the generalized quantile functions of the probability measures $\mu/r$ and $\nu/r$ respectively. 

\begin{prop}
We have $\mathcal{W}=\mathcal{Q}_r=\mathcal{L}$. 
Additionally 
\emph{(i)} $\mathcal{Q}_r$ is conditionally negative definite on the space of measures of mass $r$; 
\emph{(ii)} for any three positive measures $\mu,\nu,\gamma$ such that $|\mu|=|\nu|$, 
we have $\mathcal{L}(\mu+\gamma,\nu+\gamma)=\mathcal{L}(\mu,\nu)$.
\end{prop}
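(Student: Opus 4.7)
The plan is to reduce all three identities to the probability-measure case by rescaling, then prove them there, and finally to handle items (i) and (ii) as separate consequences. Since $|\mu|=|\nu|=r$, the probability measures $\bar\mu=\mu/r$ and $\bar\nu=\nu/r$ are well defined, and each of the three functionals is $r$-homogeneous jointly in the marginals: $\mathcal{W}(\mu,\nu)=r\,\mathcal{W}(\bar\mu,\bar\nu)$ (because $P\in\Pi(\mu,\nu)$ iff $P/r\in\Pi(\bar\mu,\bar\nu)$), $\mathcal{Q}_r(\mu,\nu)=r\,\mathcal{Q}_1(\bar\mu,\bar\nu)$ (by definition, since the same quantile functions appear), and $\mathcal{L}(\mu,\nu)=r\,\mathcal{L}(\bar\mu,\bar\nu)$ (by linearity of the integral). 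Thus it suffices to prove $\mathcal{W}=\mathcal{Q}_1=\mathcal{L}$ on probability measures on the real line.

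For $\mathcal{W}=\mathcal{Q}_1$, I would use the classical one-dimensional optimal transport argument: the comonotone coupling $P=(M^{-1},N^{-1})_\#\lambda_{[0,1]}$ belongs to $\Pi(\bar\mu,\bar\nu)$ and achieves the value $\int_0^1|M^{-1}(u)-N^{-1}(u)|\,du$. Optimality for the convex cost $|x-y|$ can be obtained either by a rearrangement argument, or by applying Fubini to the pointwise identity $|a-b|=\int_\mathbb{R}|\mathbf{1}_{t\le a}-\mathbf{1}_{t\le b}|\,dt$, which rewrites any transport cost as an $L^1$-distance between CDFs independent of the coupling. For $\mathcal{W}=\mathcal{L}$, I would invoke Kantorovich-Rubinstein duality: for the cost $c(x,y)=|x-y|$, the dual problem reads $\sup_{f\in 1\text{-Lipschitz}}\int f\,d(\bar\mu-\bar\nu)$, which matches the formulation in the statement up to the symmetry $f\leftrightarrow -f$ within the 1-Lipschitz class.

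For (i), since $\mathcal{Q}_r(\mu_i,\mu_j)=r\int_\mathbb{R}|M_i^{-1}(x)-M_j^{-1}(x)|\,dx$ is an $L^1$ integral of $|a-b|$-valued quantities, the conclusion reduces to the fact that the scalar kernel $(a,b)\mapsto|a-b|$ is conditionally negative definite on $\mathbb{R}$. The cleanest proof uses the identity $|a-b|=\int_\mathbb{R}(\mathbf{1}_{t\le a}-\mathbf{1}_{t\le b})^2\,dt$, which realizes $|a-b|$ as the squared norm of $\mathbf{1}_{(-\infty,a]}-\mathbf{1}_{(-\infty,b]}$ in $L^2(\mathbb{R})$; the squared-distance kernel in any Hilbert space is conditionally negative definite. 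Given $a_1,\dots,a_n\in\mathbb{R}$ with $\sum_i a_i=0$ and $\mu_1,\dots,\mu_n$ of mass $r$, I then obtain $\sum_{i,j}a_ia_j|M_i^{-1}(x)-M_j^{-1}(x)|\le 0$ pointwise in $x$, and integrating over $x$ yields $\sum_{i,j}a_ia_j\mathcal{Q}_r(\mu_i,\mu_j)\le 0$.

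For (ii), the statement is immediate from the linearity of $f\mapsto\int f\,d(\mu-\nu)$ in the difference of its arguments: $\int f\,d((\mu+\gamma)-(\nu+\gamma))=\int f\,d(\mu-\nu)$ for every 1-Lipschitz $f$, so the infimum over 1-Lipschitz functions is unchanged. The main obstacle I anticipate is justifying Kantorovich-Rubinstein duality rigorously for measures of arbitrary equal mass $r$: this requires care with integrability of unbounded Lipschitz functions against $\bar\mu,\bar\nu$ and with attainment in the dual; the rescaling trick bypasses the issue by deferring to the well-known probability-measure case, where duality holds under finite first moment assumptions implicit in the statement.
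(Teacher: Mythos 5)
Your proof is correct and follows essentially the same route as the paper: reduce to the probability-measure case by homogeneity (the paper's ``trivially generalized to unnormalized measures''), invoke the classical one-dimensional quantile formula and Kantorovich duality for the equalities $\mathcal{W}=\mathcal{Q}_r=\mathcal{L}$, realize $\mathcal{Q}_r$ as an $L^1$ distance between functional embeddings to get conditional negative definiteness, and obtain (ii) from linearity of $f\mapsto\int f\,{\rm d}(\mu-\nu)$. One minor imprecision only: the identity $|x-y|=\int_{\R}|\mathbf{1}_{t\le x}-\mathbf{1}_{t\le y}|\,{\rm d}t$ does not make the transport cost coupling-independent --- after Fubini the inner integral $\iint|\mathbf{1}_{t\le x}-\mathbf{1}_{t\le y}|\,P({\rm d}x,{\rm d}y)$ still depends on $P$ and is only bounded below by $|F_{\bar\mu}(t)-F_{\bar\nu}(t)|$, with equality for the comonotone coupling; this is the standard argument and your proof goes through once stated this way.
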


\begin{proof} %Equation~(\ref{eq:optimal}) is the generic Kantorovich formulation of optimal transport, which is easily generalized to 
%other cost functions and spaces, the variant being that we consider an unnormalized mass by reflecting it directly 
%in the set $\Pi$. 
The equality between~(\ref{eq:optimal}) and~(\ref{eq:quantile}) is known for probability measures on the 
real line---see Proposition 2.17 in~\cite{Santambrogio15} for instance, and can be trivially generalized to unnormalized measures. 
%Because the cost function $|\cdot|$ is homogeneous, we see that the scaling factor $r$ can be removed when considering the quantile 
%function and multiplied back. 
The equality between~(\ref{eq:optimal}) and~(\ref{eq:Kanto}) is due to the well known Kantorovich duality for a distance 
cost~\cite[Particular case 5.4]{Villani09} which can also be trivially generalized to unnormalized measures, 
which proves the main statement of the proposition. 

The definition of $Q_r$ shows that the Wasserstein distance 
is the $l_1$ norm of $r M^{-1}- r N^{-1}$, and is therefore conditionally negative definite (as the $l_1$ distance 
between two direct representations of $\mu$ and $\nu$ as functions $r M^{-1}$ and $r N^{-1}$), proving point (i). 
The second statement is immediate.
\end{proof}

%Equation~(\ref{eq:optimal}) is the generic Kantorovich formulation of optimal transport, which is easily generalized to 
%other cost functions and spaces, the variant being that we consider an unnormalized mass by reflecting it directly 
%in the set $\Pi$. The equality between~(\ref{eq:optimal}) and~(\ref{eq:quantile}) is only valid for probability measures on the real line. 
%Because the cost function $|\cdot|$ is homogeneous, we see that the scaling factor $r$ can be removed when considering the quantile 
%function and multiplied back. The equality between~(\ref{eq:optimal}) and~(\ref{eq:Kanto}) is due to the well known Kantorovich duality for a distance 
%cost~\cite[Particular case 5.4]{Villani09} which can also be trivially generalized to unnormalized measures, 
%proving therefore the main statement of the proposition. The definition of $Q_r$ shows that the Wasserstein distance 
%is the $l_1$ norm of $r M^{-1}- r N^{-1}$, and is therefore a negative definite kernel (as the $l_1$ distance 
%between two direct representations of $\mu$ and $\nu$ as functions $r M^{-1}$ and $r N^{-1}$), proving point $(i)$. 
%The second statement is immediate.

%We conclude with an important practical remark: 
\begin{rem} For two unnormalized uniform empirical measures 
$\mu=\sum_{i=1}^n \delta_{x_i}$ and $\nu=\sum_{i=1}^n \delta_{y_i}$ of the same size, with ordered 
$x_1\leq \cdots \leq x_{n}$ and $y_1\leq \cdots \leq y_{n}$, one has:
$\mathcal{W}(\mu,\nu)=\sum_{i=1}^n|x_i-y_i|=\|X-Y\|_1$, where $X=(x_1,\cdots,x_n)\in\R^n$ and $Y=(y_1,\cdots,y_n)\in\R^n$.
\end{rem}

%The {\em Kantorovich norm}~\cite{Guittet02} is a natural norm on $\M_0(\R)$:
%Let $\mu$ be a finite measure such that $\mu(\R)=0$. Let $\mu_+$ (resp. $\mu_-$) be its positive (resp. negative) variation,
%and let $\Psi_\mu$ be the set of positive measures $\psi$ on $\R\times \R$ such that, for any Borel set $A$:
%$ \psi(\R\times A) = \mu_+(A)\text{ and }\psi(A\times \R)=\mu_-(A).$
%The {\em Kantorovich norm}~\cite{hanin1999extension,Guittet02} is:
%$$\|\mu\|_{K}=\underset{\psi\in\Psi_\mu}{\rm inf}\ \int_{\R\times \R}|x-y|\ {\rm d}\psi(x,y).$$
%that is, the first Wasserstein distance between $\mu_+$ and $\mu_-$.

%Note that this definition is simpler  
%for a finite empirical measure $\mu$ that is a sum of signed Dirac measures:
%$\mu=\sum_{i=1}^k \delta_{p_i}-\sum_{i=1}^k \delta_{q_i}$,
%$k\in\mathbb{N}^*$, $p_1,...,p_k,q_1,...,q_k\in \R$.
%In that case, one can
%sort the points $p_1\leq ... \leq p_k\in\R$ and $q_1 \leq ... \leq q_k\in \R$.  
%Then, the Kantorovich norm is:
%$\|\mu\|_K = \sum_{i=1}^k |q_i-p_i|=\|P-Q\|_1,$
%where
%$Q=(q_1,...,q_k)\in\R^k$ and 
%$P=(p_1,...,p_k)\in\R^k$. 

\section{The Sliced Wasserstein Kernel}

\subsection{The Sliced Wasserstein Kernel}

In this section we define a new kernel between persistence diagrams, called the {\em Sliced Wasserstein} kernel,
based on the Sliced Wasserstein metric of~\cite{Rabin11}. The idea underlying this
metric is to slice the plane with lines passing through the origin, to
project the measures onto these lines where $\mathcal W$ is computed, 
and to integrate those distances over all possible lines.  Formally:

\begin{defin}  
Given $\theta\in\R^2$ with $\|\theta\|_2=1$, let $L(\theta)$ denote the line $\{\lambda\,\theta : \lambda\in\R\}$, and
let $\pi_\theta:\R^2\rightarrow L(\theta)$ be the orthogonal projection onto $L(\theta)$.
Let $\Dg_1,\Dg_2$ be two persistence diagrams, and let $\mu_1^\theta=\sum_{p\in \Dg_1}\delta_{\pi_\theta(p)}$ and 
$\mu_{1\Delta}^\theta=\sum_{p\in \Dg_1}\delta_{\pi_\theta\circ\pi_\Delta(p)}$,
and similarly for $\mu_2^\theta$, where $\pi_\Delta$ is the orthogonal projection onto the diagonal.
Then, the {\em Sliced Wasserstein distance} 
is defined as:
$$\SW(\Dg_1,\Dg_2)\eqdef\frac{1}{2\pi}\int_{\mathbb{S}_1} \mathcal W(\mu_1^\theta+\mu_{2\Delta}^\theta,\mu_2^\theta+\mu_{1\Delta}^\theta){\rm d}\theta.$$
\end{defin}

Note that, by symmetry, one can restrict on the half-circle $[-\frac\pi 2,\frac\pi 2]$ and normalize by $\pi$ instead of $2\pi$.
Since $\mathcal Q_r$ is conditionally negative definite,
we can deduce that $\SW$ itself is conditionally negative definite:

\begin{lem}\label{lem:nd}
$\SW$ is conditionally negative definite on $\SpfbD$.
\end{lem}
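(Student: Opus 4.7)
The plan is to reduce the conditional negative definiteness of $\SW$ to that of $\mathcal Q_r$ (point (i) of the previous proposition) by integrating over $\theta$, after showing the integrand is conditionally negative definite in the diagrams for each fixed $\theta$.

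Fix a finite family $\Dg_1,\dots,\Dg_n \in \SpfbD$ and scalars $a_1,\dots,a_n \in \R$ with $\sum_i a_i = 0$. Since integration is a positive linear operation, it preserves CND; thus it suffices to prove that for every $\theta \in \mathbb{S}_1$,
\begin{equation*}
\sum_{i,j} a_i a_j\, \mathcal W(\mu_i^\theta + \mu_{j\Delta}^\theta,\, \mu_j^\theta + \mu_{i\Delta}^\theta) \leq 0,
\end{equation*}
and then integrate this inequality against $\mathrm d\theta$.

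The obstacle is that property (i) only gives CND of $\mathcal Q_r = \mathcal W$ on the set of measures of a \emph{fixed} total mass $r$, whereas the measures $\mu_i^\theta + \mu_{j\Delta}^\theta$ have mass $|\Dg_i| + |\Dg_j|$, which depends on the pair $(i,j)$. To get around this, I would use property (ii) to put all diagrams on the same footing. Let $N = \sum_k |\Dg_k|$ and set
\begin{equation*}
T_i^\theta \;\eqdef\; \mu_i^\theta + \sum_{k\neq i} \mu_{k\Delta}^\theta,
\end{equation*}
so that every $T_i^\theta$ has mass exactly $N$. Applying property (ii) with $\gamma = \sum_{k\neq i,j} \mu_{k\Delta}^\theta$ yields
\begin{equation*}
\mathcal W(T_i^\theta, T_j^\theta) \;=\; \mathcal W\!\left(\mu_i^\theta + \mu_{j\Delta}^\theta,\, \mu_j^\theta + \mu_{i\Delta}^\theta\right).
\end{equation*}

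Now the sum $\sum_{i,j} a_i a_j \mathcal W(T_i^\theta, T_j^\theta)$ involves only measures of common mass $N$, so point (i) of the previous proposition applies (with $r = N$) and gives $\sum_{i,j} a_i a_j \mathcal W(T_i^\theta, T_j^\theta) \leq 0$. Integrating over $\theta \in \mathbb{S}_1$ and dividing by $2\pi$ then yields $\sum_{i,j} a_i a_j \SW(\Dg_i, \Dg_j) \leq 0$, which is the desired conditional negative definiteness. The main (and essentially only) subtlety is the mass-balancing step via property (ii); everything else is a routine interchange between the finite sum and the integral, which is justified because the family is finite.
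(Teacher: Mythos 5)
Your proposal is correct and follows essentially the same route as the paper's proof: the mass-balancing measures $T_i^\theta$ are exactly the paper's $\tilde\mu_i^\theta$, the added measure $\gamma=\sum_{k\neq i,j}\mu_{k\Delta}^\theta$ is the paper's $\tilde\mu_{ij\Delta}^\theta$, and the conclusion follows identically from point (i) applied at the common mass followed by linearity of integration.
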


%\begin{lem}\label{prop:p.d.}
%Let $X$ be a set of bounded and finite persistence diagrams.
%Then, $\SW$ is negative semi-definite on $X$.
%\end{lem} 

\begin{proof}

Let $n\in\mathbb{N}^*$, $a_1,\cdots,a_n\in\R$ such that $\sum_ia_i=0$ and $\Dg_1,\cdots,\Dg_n\in \SpfbD$.
Given $1\leq i\leq n$, we let 
$\tilde\mu_i^\theta=\mu_i^\theta + \sum_{q\in \Dg_k,k\neq i}\delta_{\pi_\theta\circ\pi_\Delta(q)}$, 
$\tilde\mu_{ij\Delta}^\theta=\sum_{p\in \Dg_k,k\neq i,j}\delta_{\pi_\theta\circ\pi_\Delta(p)}$ and $d=\sum_i |\Dg_i|$.
Then:
\begin{align}
&\sum_{i,j} a_ia_j \mathcal W(\mu_i^\theta+\mu_{j\Delta}^\theta,\mu_j^\theta+\mu_{i\Delta}^\theta)
=\sum_{i,j} a_ia_j\mathcal L(\mu_i^\theta+\mu_{j\Delta}^\theta,\mu_j^\theta+\mu_{i\Delta}^\theta)\nonumber \\
&=\sum_{i,j} a_ia_j\mathcal L(\mu_i^\theta+\mu_{j\Delta}^\theta+\mu_{ij\Delta}^\theta,\mu_j^\theta+\mu_{i\Delta}^\theta+\mu_{ij\Delta}^\theta)\nonumber\\
&= \sum_{i,j} a_ia_j\mathcal L(\tilde\mu_i^\theta,\tilde\mu_j^\theta)
= \sum_{i,j} a_ia_j\mathcal Q_d(\tilde\mu_i^\theta,\tilde\mu_j^\theta)\leq 0\nonumber
\end{align}  
The result follows by linearity of integration.

\end{proof}

Hence, the theorem of~\cite{Berg84}
allows us to define a valid kernel with: 
\begin{equation}\label{eq:kSW}
\kSW(\Dg_1,\Dg_2)\eqdef{\rm exp}\left(-\frac{\SW(\Dg_1,\Dg_2)}{2\sigma^2}\right).
\end{equation}

\subsection{Metric Equivalence}

We now give the main theoretical result of
this article, which states that $\SW$ is {\em strongly equivalent} to $d_1$.  
This has to be compared with~\cite{Reininghaus15} and~\cite{Kusano16}, which
only prove stability and injectivity. Our equivalence result
states that $\kSW$, in addition to be stable
and injective, preserves the metric between persistence diagrams, which should
intuitively lead to an improvement of the classification power. This
intuition is illustrated in Section~\ref{sec:expe} and
Figure~\ref{fig:Airplanedistances}, where we show an improvement of
classification accuracies on several benchmark applications.

\subsubsection{Stability}

\begin{thm}\label{th:stab}
%Let $X$ be the set of bounded and finite persistence diagrams.
%Then, 
$\SW$  is stable with respect to $d_1$ on $\SpfbD$.
For any $\Dg_1,\Dg_2\in \SpfbD$, one has: $$\SW(\Dg_1,\Dg_2)\leq 2\sqrt{2}d_1(\Dg_1,\Dg_2).$$
\end{thm}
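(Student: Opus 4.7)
The plan is to build an explicit transport plan for each projected one-dimensional problem from an optimal partial matching realizing $d_1(\Dg_1,\Dg_2)$, then to bound its cost fiberwise and integrate over $\theta$.

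First, I would fix an optimal partial bijection $\Gamma:A\rightarrow B$ with $A\subseteq\Dg_1$, $B\subseteq\Dg_2$ attaining (or approaching) the infimum defining $d_1(\Dg_1,\Dg_2)$, so
\[d_1(\Dg_1,\Dg_2)=\sum_{x\in A}\|x-\Gamma(x)\|_\infty+\sum_{z\in(\Dg_1\sqcup\Dg_2)\setminus(A\sqcup B)}\|z-\pi_\Delta(z)\|_\infty.\]
For each unit vector $\theta\in\mathbb{S}_1$, I would then define a transport plan $P_\theta$ between the balanced measures $\mu_1^\theta+\mu_{2\Delta}^\theta$ and $\mu_2^\theta+\mu_{1\Delta}^\theta$ as follows: for each matched pair $(x,\Gamma(x))$, send the Dirac at $\pi_\theta(x)\in\mu_1^\theta$ to the Dirac at $\pi_\theta(\Gamma(x))\in\mu_2^\theta$ and send the Dirac at $\pi_\theta(\pi_\Delta(\Gamma(x)))\in\mu_{2\Delta}^\theta$ to $\pi_\theta(\pi_\Delta(x))\in\mu_{1\Delta}^\theta$; for each unmatched $x\in\Dg_1\setminus A$, send $\pi_\theta(x)\in\mu_1^\theta$ directly to $\pi_\theta(\pi_\Delta(x))\in\mu_{1\Delta}^\theta$; and symmetrically for unmatched $y\in\Dg_2\setminus B$. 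This is a valid plan because the construction of the two measures adds one diagonal copy per point exactly so as to balance masses.

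Next, I would estimate the $\mathcal{W}$-cost of $P_\theta$ using two elementary inequalities: the projection inequality $|\pi_\theta(u)-\pi_\theta(v)|=|\langle u-v,\theta\rangle|\leq\|u-v\|_2\leq\sqrt{2}\|u-v\|_\infty$, and the fact that $\pi_\Delta$ is $1$-Lipschitz with respect to $\|\cdot\|_\infty$ (since $\pi_\Delta(a,b)=((a+b)/2,(a+b)/2)$). A matched pair then contributes at most
\[\sqrt{2}\|x-\Gamma(x)\|_\infty+\sqrt{2}\|\pi_\Delta(x)-\pi_\Delta(\Gamma(x))\|_\infty\leq 2\sqrt{2}\|x-\Gamma(x)\|_\infty,\]
while an unmatched point $z$ contributes at most $\sqrt{2}\|z-\pi_\Delta(z)\|_\infty$. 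Summing, the total cost of $P_\theta$ is bounded by $2\sqrt{2}\,d_1(\Dg_1,\Dg_2)$, uniformly in $\theta$. Since $\mathcal{W}$ is an infimum over transport plans, the same bound holds for $\mathcal{W}(\mu_1^\theta+\mu_{2\Delta}^\theta,\mu_2^\theta+\mu_{1\Delta}^\theta)$.

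Finally, I would integrate over $\theta\in\mathbb{S}_1$ against $\tfrac{1}{2\pi}\,\mathrm{d}\theta$; since the bound is independent of $\theta$, this yields $\SW(\Dg_1,\Dg_2)\leq 2\sqrt{2}\,d_1(\Dg_1,\Dg_2)$ directly. The main obstacle I anticipate is bookkeeping rather than analysis: one has to verify carefully that the transport plan built from $\Gamma$ is indeed a coupling of the prescribed balanced measures — in particular, that the diagonal copies of matched points on the two sides can be paired among themselves so no unmatched mass is left over. Once the masses are seen to match by construction, the two inequalities above close the argument, and the factor $2\sqrt{2}$ is seen to split as $\sqrt{2}$ from the projection onto $L(\theta)$ and $2$ from having to pay for the diagonal copies in addition to the original pair.
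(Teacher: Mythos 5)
Your proposal is correct and is essentially the paper's own argument: the paper likewise builds a suboptimal matching $\gamma_\theta$ of the projected measures from the optimal partial bijection realizing $d_1$ (pairing original points with original points and diagonal copies with diagonal copies), bounds its cost uniformly in $\theta$ via $|\langle p-q,\theta\rangle|\leq\|p-q\|_2\leq\sqrt{2}\|p-q\|_\infty$ and the $1$-Lipschitzness of $\pi_\Delta$, and integrates. Your transport-plan phrasing and the mass-balance check are just a more explicit rendering of the same construction.
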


\begin{proof}
Let $\theta\in\R^2$ be such that $\|\theta\|_2=1$. Let $\Dg_1,\Dg_2\in \SpfbD$, and
let $\Dg_1^\theta = \{\pi_\theta(p):p\in \Dg_1\}\cup\{\pi_\theta\circ\pi_\Delta(q):q\in \Dg_2\}$ and 
$\Dg_2^\theta=\{\pi_\theta(q):q\in \Dg_2\}\cup\{\pi_\theta\circ\pi_\Delta(p):p\in \Dg_1\}$. % and $\mu_i,\mu_j$ be the corresponding signed empirical measures.
Let $\gamma^*$ be the one-to-one bijection between $\Dg_1^\theta$ and $\Dg_2^\theta$
induced by $\mathcal W(\mu_1^\theta+\mu_{2\Delta}^\theta,\mu_2^\theta+\mu_{1\Delta}^\theta)$, and
let $\gamma$ be the 
one-to-one bijection between $\Dg_1\cup\pi_\Delta(\Dg_2)$ and $\Dg_2\cup\pi_\Delta(\Dg_1)$
induced by the partial bijection achieving $d_1(\Dg_1,\Dg_2)$.
%optimal partial bijection that achieves $d_1(\Dg_i,\Dg_j)$
%$\mathcal W(\mu_1^\theta+\mu_{2\Delta}^\theta,\mu_2^\theta+\mu_{1\Delta}^\theta)$. %which is well-defined since $D_i$ and $D_j$ have finite supports. 
Then $\gamma$ naturally induces a one-to-one matching $\gamma_\theta$
between $\Dg_1^\theta$ and $\Dg_2^\theta$ with:
$$\gamma_\theta=\{(\pi_\theta(p),\pi_\theta(q)):(p,q)\in\gamma\}\cup
\{(\pi_\theta\circ\pi_\Delta(p),\pi_\theta\circ\pi_\Delta(q)):(p,q)\in\gamma,\ p,q\not\in{\rm im}(\pi_\Delta)\}.$$

Now, one has the following inequalities:
\begin{align}
&\mathcal W(\mu_1^\theta+\mu_{2\Delta}^\theta,\mu_2^\theta+\mu_{1\Delta}^\theta) = \sum_{(x,y)\in\gamma^*} |x-y|\nonumber\\
%\|\pi_\theta(\mu_i)-\pi_\theta(\mu_j)\|_K
&\leq \sum_{(\pi_\theta(p),\pi_\theta(q))\in\gamma_\theta} |\langle p,\theta\rangle-\langle q, \theta\rangle|
{\rm\ since\ }
\gamma_\theta{\rm\ is\ not\ the\ optimal\ matching\ between\ }\Dg_1^\theta{\rm\ and\ }\Dg_2^\theta\nonumber\\
&\leq \sum_{(\pi_\theta(p),\pi_\theta(q))\in\gamma_\theta} \|p-q\|_2\text{ by the Cauchy-Schwarz inequality since }\|\theta\|_2=1\nonumber\\
&\leq \sqrt{2}\sum_{(\pi_\theta(p),\pi_\theta(q))\in\gamma_\theta} \|p-q\|_\infty{\rm\ since\ }\|\cdot\|_2\leq\sqrt{2}\|\cdot\|_\infty\nonumber\\
&\leq 2\sqrt{2}\sum_{(p,q)\in\gamma} \|p-q\|_\infty{\rm\ since\ }\|\pi_\Delta(p)-\pi_\Delta(q)\|_\infty \leq \|p-q\|_\infty\nonumber\\
&= 2\sqrt{2}d_1(\Dg_1,\Dg_2)\nonumber
\end{align}

Hence, we have
$\SW(\Dg_1,\Dg_2)\leq 2\sqrt{2}d_1(\Dg_1,\Dg_2)$.
\end{proof}

We now prove the discriminativity of $\SW$.
For this, we need a stronger assumption on the persistence diagrams, namely their cardinalities have not only to be finite, but also bounded
by some $N\in\mathbb{N}^*$.

\subsubsection{Discriminativity}

\begin{thm}\label{th:discr}
%Let $X$ be the set of bounded persistence diagrams with cardinalities bounded by $N\in\mathbb{N}^*$.
%Then, 
$\SW$  is {\em discriminative} with respect to $d_1$ on $\SpND$.
For any $\Dg_1,\Dg_2\in X$, one has: $$\frac{1}{2M}d_1(\Dg_1,\Dg_2)\leq \SW(\Dg_1,\Dg_2),$$
where $M=1+2N(2N-1)$. 
\end{thm}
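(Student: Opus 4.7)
The plan is a pigeonhole argument on arcs of directions along which the 1D-optimal matching is constant, combined with a length-weighted lower bound on such an arc. First, I would parametrize directions by $\theta\in[0,\pi)$ (using the noted symmetry $\theta\leftrightarrow-\theta$) and observe that the 1D-optimal matching $\gamma_\theta$ realizing $\mathcal W(\mu_1^\theta+\mu_{2\Delta}^\theta,\mu_2^\theta+\mu_{1\Delta}^\theta)$ is determined by the sorted order of the atom projections on each side of $L(\theta)$. Since each side carries at most $|\Dg_1|+|\Dg_2|\leq 2N$ atoms, this sorted order can change only when $\theta$ is perpendicular to the difference of two same-side atoms, yielding at most $2\binom{2N}{2}=2N(2N-1)$ critical angles in $[0,\pi)$ and hence at most $M=1+2N(2N-1)$ open arcs of constancy. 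By pigeonhole, some arc $I^*$ has length $|I^*|\geq\pi/M$, on which $\gamma_\theta$ equals a fixed bijection $\gamma^*$ between $\Dg_1\cup\pi_\Delta(\Dg_2)$ and $\Dg_2\cup\pi_\Delta(\Dg_1)$.

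Next, I would relate the planar $\|\cdot\|_\infty$-cost of $\gamma^*$ to $d_1$. The bijection $\gamma^*$ can be converted into a valid partial bijection $\Gamma^*$ between $\Dg_1$ and $\Dg_2$ by keeping only the pairs involving two original points and routing every remaining original point to its own diagonal projection. Since $\pi_\Delta(p)$ minimizes $\|p-x\|_\infty$ over $x\in\Delta$, this re-routing cannot increase the total cost, giving $d_1(\Dg_1,\Dg_2)\leq\mathrm{cost}(\Gamma^*)\leq\sum_{(p,q)\in\gamma^*}\|p-q\|_\infty$.

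On $I^*$ we have $\mathcal W(\theta)=\sum_{(p,q)\in\gamma^*}|\langle p-q,\theta\rangle|$, so the discriminativity bound reduces to an inequality of the form
\[
\int_{I^*}\mathcal W(\theta)\,d\theta\ \geq\ \tfrac{|I^*|}{2}\sum_{(p,q)\in\gamma^*}\|p-q\|_\infty,
\]
which, combined with $|I^*|\geq\pi/M$, yields $\SW=\frac{1}{\pi}\int_0^\pi\mathcal W(\theta)\,d\theta\geq\frac{1}{2M}\sum\|p-q\|_\infty\geq\frac{d_1(\Dg_1,\Dg_2)}{2M}$.

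The main obstacle is this last integral inequality. The naive per-pair estimate fails: for an individual pair $(p,q)\in\gamma^*$ the integrand $|\langle p-q,\theta\rangle|$ can vanish inside $I^*$ at the direction perpendicular to $p-q$, and such directions are \emph{not} among the critical angles of the partition (which only record same-side collisions). The inequality must therefore be established collectively across all pairs, exploiting that $\gamma^*$ is 1D-\emph{optimal} throughout $I^*$: if $\sum_{(p,q)\in\gamma^*}|\langle p-q,\theta\rangle|$ were small on average over $I^*$ compared to $\sum\|p-q\|_\infty$, a competing re-pairing of the projected atoms would outperform $\gamma^*$ on part of $I^*$, contradicting its optimality. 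Quantifying this to recover precisely the factor $\tfrac{1}{2}$ is the technical crux.
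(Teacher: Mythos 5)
Your setup coincides with the paper's: the same critical angles (directions orthogonal to differences of same-side atoms), the same count $|\Theta|\leq 2N(2N-1)+2$, the constancy of the 1D-optimal matching on each arc, and the same reduction $\sum_{(p,q)\in\gamma^*}\|p-q\|_\infty\geq d_1(\Dg_1,\Dg_2)$. But the step you flag as the ``technical crux'' is a genuine gap, and the inequality you reduce to, $\int_{I^*}\mathcal W(\theta)\,d\theta\geq\frac{|I^*|}{2}\sum\|p-q\|_\infty$, is actually false: take $N=1$ with $\Dg_1=\{p\}$, $\Dg_2=\{q\}$ and $p-q$ parallel to the diagonal. Then the two projected multisets coincide at the direction $\theta_0\perp(p-q)$, so $\mathcal W(\theta_0)=0$, and $\theta_0$ is \emph{not} a critical angle (those only record same-side collisions, as you note). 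Hence $\mathcal W$ is small on a whole neighbourhood of $\theta_0$ inside an arc of constancy, and a short computation (arc of length $\pi/3$ centred at $\theta_0$: $\int|\cos|=2(1-\cos(\pi/6))\approx 0.27\,\|p-q\|_2$ per pair versus the required $\frac{\pi}{6}\|p-q\|_\infty\approx 0.37\,\|p-q\|_2$) shows the linear-in-$|I^*|$ bound with constant $\tfrac12$ fails. Appealing to 1D-optimality cannot rescue it, since in this example the \emph{entire} optimal cost vanishes at an interior direction; no competing pairing does better there.

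The paper's resolution is different in two ways, both of which you need. First, it does not seek a bound linear in the arc length: for each pair it uses the uniform estimate $\int_0^{L}|\cos(\alpha+\beta)|\,d\beta\geq \frac{L^2}{2\pi}$, valid for every phase $\alpha$ and every $L\leq\pi$ (worst case: the arc centred at a zero of the cosine, where the integral is $2(1-\cos(L/2))\geq L^2/(2\pi)$). This quadratic-in-$L$ bound tolerates the vanishing you worried about, so the per-pair estimate does go through after all --- just with a different exponent. Second, your pigeonhole on a single longest arc is too lossy even with the correct estimate: it would give $\SW\geq\frac{1}{\pi}\cdot\frac{(\pi/M)^2}{2\pi}\,d_1=\frac{d_1}{2M^2}$, not $\frac{d_1}{2M}$. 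The paper instead sums the quadratic bounds over \emph{all} arcs and applies Cauchy--Schwarz, $\sum_k(\theta_{k+1}-\theta_k)^2\geq\frac{\pi^2}{|\Theta|-1}$, which recovers the factor $\frac{1}{2M}$. So: replace your target inequality by the per-pair quadratic cosine bound, and replace the pigeonhole by summation over all arcs plus Cauchy--Schwarz.
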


\begin{proof}
Let $\Dg_1,\Dg_2\in \SpND$. %, and $\mu_i,\mu_j$ be the two corresponding signed empirical measures.
Let $\mathbb{S}^+_1\subseteq\mathbb{S}_1$ be the subset of the circle delimited by the angles $\left[-\frac{\pi}{2},\frac{\pi}{2}\right]$.
Let us consider the following set:
$$\Theta_1 = \left\{\theta\in \mathbb{S}^+_1:\exists p_1,p_2\in \Dg_1\text{ such that }\langle\theta, p_2-p_1\rangle=0\right\},$$
and similarly:
$$\Theta_2 = \left\{\theta\in \mathbb{S}^+_1:\exists q_1,q_2\in \Dg_2\text{ such that }\langle\theta, q_2-q_1\rangle=0\right\}.$$
Now, we let $\Theta=\Theta_1\cup\Theta_2\cup\left\{-\frac{\pi}{2},\frac{\pi}{2}\right\}$ be the union of these sets, 
and sort $\Theta$ in decreasing order.
One has $|\Theta|\leq 2N(2N-1)+2=M+1$ since a vector $\theta$ that is orthogonal to a line defined by a specific pair of 
points $(p_1,p_2)$ appears exactly once in $\mathbb{S}_1^+$.

For any $\theta$ that is between two consecutive $\theta_k,\theta_{k+1}\in\Theta$, the order of the projections 
onto $L(\theta)$ of the points of both $\Dg_1$ and $\Dg_2$ remains the same. Given any point $p\in \Dg_1\cup\pi_\Delta(\Dg_2)$, 
we let $\gamma(p)\in \Dg_2\cup\pi_\Delta(\Dg_1)$ be its matching point according 
to the matching given by $\mathcal W(\mu_1^\theta+\mu_{2\Delta}^\theta,\mu_2^\theta+\mu_{1\Delta}^\theta)$.
Then, one has the following equalities:

\begin{align}
\int_{\theta_k}^{\theta_{k+1}}&\mathcal W(\mu_1^\theta+\mu_{2\Delta}^\theta,\mu_2^\theta+\mu_{1\Delta}^\theta)\ {\rm d}\theta\nonumber\\
&=\int_{\theta_k}^{\theta_{k+1}}\underset{p\in \Dg_1\cup\pi_\Delta(\Dg_2)}{\sum}|\langle p-\gamma(p),\theta\rangle|\ {\rm d}\theta\nonumber\\
&=\underset{p\in \Dg_1\cup\pi_\Delta(\Dg_2)}{\sum}\|p-\gamma(p)\|_2\int_0^{\theta_{k+1}-\theta_k}|{\rm cos}\left(\alpha_p+\beta\right)
|\ {\rm d}\beta{\rm\ where\ }\alpha_p=\angle(p-\gamma(p),\theta_k)\nonumber
\end{align}

\begin{figure}\begin{center} 
\includegraphics[width=15cm]{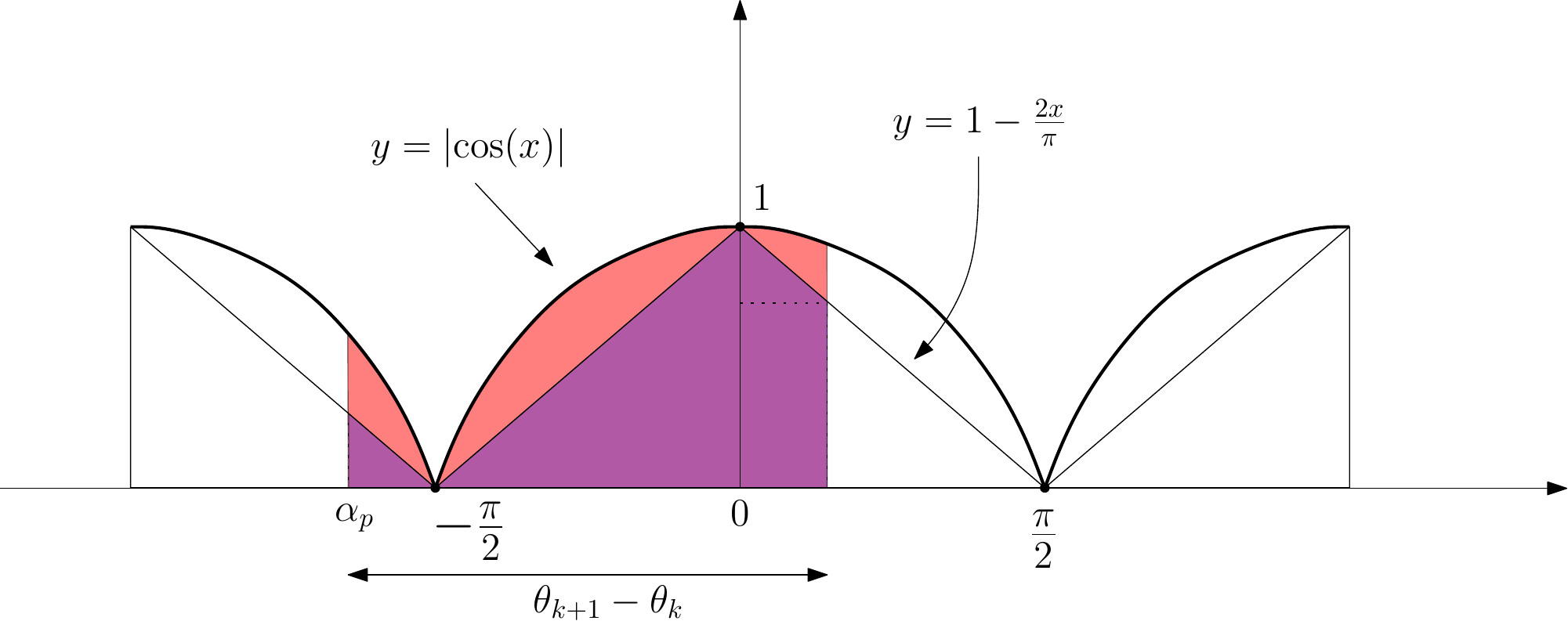}
\caption{\label{fig:cosineConc}
The integral of $|{\rm cos}(\cdot)|$ has a lower bound that depends on the length of the integral support.
In particular, when $\theta_{k+1}-\theta_k\leq\pi$, this integral is more than $\frac{\left(\theta_{k+1}-\theta_k\right)^2}{2\pi}$ 
by the Cauchy-Schwarz inequality.}
\end{center}\end{figure}

We need to lower bound $\int_0^{\theta_{k+1}-\theta_k}|{\rm cos}\left(\alpha_p+\beta\right)|d\beta$.
Since $\theta_{k+1}-\theta_k\leq\pi$, one can show that this integral cannot be less than $\frac{\left(\theta_{k+1}-\theta_k\right)^2}{2\pi}$ 
using cosine concavity---see Figure~\ref{fig:cosineConc}. 
Hence, we now have the following lower bound:
 
\begin{align}
\int_{\theta_k}^{\theta_{k+1}} &\mathcal W(\mu_1^\theta+\mu_{2\Delta}^\theta,\mu_2^\theta+\mu_{1\Delta}^\theta)\ {\rm d}\theta
\geq \frac{\left(\theta_{k+1}-\theta_k\right)^2}{2\pi}\underset{p\in \Dg_1\cup\pi_\Delta(\Dg_2)}{\sum}\|p-\gamma(p)\|_2\nonumber\\
&\geq \frac{\left(\theta_{k+1}-\theta_k\right)^2}{2\pi}\underset{p\in \Dg_1\cup\pi_\Delta(\Dg_2)}{\sum}\|p-\gamma(p)\|_\infty\geq\
\frac{\left(\theta_{k+1}-\theta_k\right)^2}{2\pi}\underset{\substack{ p\notin \pi_\Delta(\Dg_2) \\ {\rm\ or\ }\gamma(p)\notin\pi_\Delta(\Dg_1)} }{\sum}\|p-\gamma(p)\|_\infty\nonumber\\
&\geq \frac{\left(\theta_{k+1}-\theta_k\right)^2}{2\pi}d_1(\Dg_1,\Dg_2).
\nonumber
\end{align}

Let $\Theta=\left\{\theta_1=-\frac{\pi}{2},\theta_2,...,\theta_{|\Theta|}=\frac{\pi}{2}\right\}$. Then, one has:

\begin{align*}
\SW(\Dg_1,\Dg_2) & =\frac{1}{\pi}\int_{-\frac{\pi}{2}}^{\frac{\pi}{2}} \mathcal W(\mu_1^\theta+\mu_{2\Delta}^\theta,\mu_2^\theta+\mu_{1\Delta}^\theta)\ {\rm d}\theta
=\frac{1}{\pi}\sum_{k=1}^{|\Theta|-1}\int_{\theta_k}^{\theta_{k+1}} \mathcal W(\mu_1^\theta+\mu_{2\Delta}^\theta,\mu_2^\theta+\mu_{1\Delta}^\theta)\ 	{\rm d}\theta\nonumber\\
&\geq \left(\sum_{k=1}^{|\Theta|-1}\left(\theta_{k+1}-\theta_k\right)^2\right)\frac{d_1(\Dg_1,\Dg_2)}{2\pi^2} \nonumber\\
&\geq \frac{\pi^2}{|\Theta|-1}\frac{d_1(\Dg_1,\Dg_2)}{2\pi^2}\text{ by the Cauchy-Schwarz inequality} \nonumber\\
&\geq \frac{d_1(\Dg_1,\Dg_2)}{2M}
\end{align*}

Hence, $\SW$ is discriminative.
\end{proof}

In particular, Theorems~\ref{th:stab} and~\ref{th:discr} allow us to show that $d_{\rm SW}$, the distance induced by $k_{\rm SW}$ in its RKHS,
is also equivalent to $d_1$ in a broader sense: there exist continuous, positive and monotone functions $g,h$ such that $g(0)=h(0)=0$
and $h\circ d_1\leq d_{\rm SW}\leq g\circ d_1$. \\
 
The condition on the cardinalities of persistence diagrams can be relaxed. Indeed, one can prove that the feature map $\phi_{\rm SW}$ induced by $\kSW$ 
is injective when the persistence diagrams are only assumed to be finite and bounded:

\begin{prop}\label{prop:inj}
%Let $X$ be the set of bounded and finite persistence diagrams.
The feature map $\phi_{\rm SW}$ is continuous and injective with respect to $d_1$ on $\SpfbD$.
\end{prop}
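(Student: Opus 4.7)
The plan is to split the statement into its two parts and handle them in order: continuity follows almost immediately from stability, whereas injectivity requires a Cramér--Wold / Radon-transform argument.

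\textbf{Continuity.} Since $\phi_{\rm SW}$ maps into the RKHS $\HS_{\kSW}$, I would reduce continuity of $\phi_{\rm SW}$ to continuity of the induced distance $d_{\rm SW}$ with respect to $d_1$. Using $k_{\rm SW}(\Dg,\Dg)=1$ for all $\Dg$, one has
\[
d_{\rm SW}^2(\Dg_1,\Dg_2)=2\bigl(1-\exp(-\SW(\Dg_1,\Dg_2)/(2\sigma^2))\bigr).
\]
Combining this with Theorem~\ref{th:stab}, which gives $\SW(\Dg_1,\Dg_2)\leq 2\sqrt 2\,d_1(\Dg_1,\Dg_2)$, yields $d_{\rm SW}(\Dg_1,\Dg_2)\to 0$ whenever $d_1(\Dg_1,\Dg_2)\to 0$, hence continuity of $\phi_{\rm SW}$.

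\textbf{Injectivity.} Since $\phi_{\rm SW}(\Dg_1)=\phi_{\rm SW}(\Dg_2)$ iff $d_{\rm SW}(\Dg_1,\Dg_2)=0$ iff $\SW(\Dg_1,\Dg_2)=0$, the task reduces to showing that $\SW$ separates points in $\SpfbD$. Assuming $\SW(\Dg_1,\Dg_2)=0$, the nonnegative integrand $\theta\mapsto \mathcal W(\mu_1^\theta+\mu_{2\Delta}^\theta,\mu_2^\theta+\mu_{1\Delta}^\theta)$ vanishes for almost every $\theta\in\mathbb{S}_1$. Since the map $\theta\mapsto\mu_i^\theta+\mu_{j\Delta}^\theta$ is continuous in the weak-$*$ topology (the atoms depend continuously on $\theta$) and $\mathcal W$ is continuous under weak convergence of finite measures with uniformly bounded support, the integrand is in fact continuous, so it vanishes for every $\theta$. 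This forces the identity of pushforward measures
\[
(\pi_\theta)_\#\bigl(\Dg_1+\pi_\Delta(\Dg_2)\bigr)\ =\ (\pi_\theta)_\#\bigl(\Dg_2+\pi_\Delta(\Dg_1)\bigr) \qquad \text{for every }\theta\in\mathbb{S}_1.
\]

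\textbf{From equal projections to equal measures.} The remaining step, which I expect to be the main subtlety, is to conclude that the two finite, compactly supported planar measures $\Dg_1+\pi_\Delta(\Dg_2)$ and $\Dg_2+\pi_\Delta(\Dg_1)$ coincide. This is exactly injectivity of the Radon transform on finite Borel measures in the plane, which I would obtain from the Cramér--Wold theorem: if two finite measures on $\R^2$ have identical one-dimensional marginals in every direction, then their characteristic functions agree on every line through the origin, hence on all of $\R^2$, so the measures agree. Once $\Dg_1+\pi_\Delta(\Dg_2)=\Dg_2+\pi_\Delta(\Dg_1)$ as atomic measures on $\R^2$, I restrict to the open half-plane $\{(b,d): b<d\}$; the contributions $\pi_\Delta(\Dg_1),\pi_\Delta(\Dg_2)$ are supported on $\Delta$ and disappear, leaving $\Dg_1=\Dg_2$ as multi-sets.

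The main obstacle is really just justifying the Radon-type inversion cleanly on finite (non-probability) discrete measures, together with the continuity-in-$\theta$ argument needed to upgrade ``a.e.~$\theta$'' to ``every $\theta$''; both are standard but deserve a careful sentence each in the final write-up.
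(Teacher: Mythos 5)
Your continuity argument is the same as the paper's (stability from Theorem~\ref{th:stab} combined with the explicit formula for $d_{\rm SW}$ in terms of $\SW$), but your injectivity argument takes a genuinely different route. The paper also reduces to $\SW(\Dg_1,\Dg_2)=0$ and to the vanishing of $\mathcal W(\mu_1^\theta+\mu_{2\Delta}^\theta,\mu_2^\theta+\mu_{1\Delta}^\theta)$ for every $\theta$, but from there it argues by contradiction and by counting: if some point $p\in\Dg_1$ were absent from $\Dg_2$, the vanishing of $\mathcal W$ would force, for each $\theta$, some $q_\theta\in\Dg_2$ with $\pi_\theta(q_\theta)=\pi_\theta(p)$; since the fibers $\pi_\theta^{-1}(\pi_\theta(p))$ for distinct $\theta$ meet only at $p$, the points $q_\theta$ are pairwise distinct, so $\Dg_2$ would have to contain infinitely many points --- a contradiction with finiteness. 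You instead invoke injectivity of the Radon transform via Cram\'er--Wold to conclude that the planar measures $\Dg_1+\pi_\Delta(\Dg_2)$ and $\Dg_2+\pi_\Delta(\Dg_1)$ coincide, and then strip off the diagonal part. Both routes are valid. Yours is the standard ``sliced metrics separate points'' argument; it proves the stronger statement that the two augmented measures are equal, and it makes explicit two steps the paper glosses over: that $\mathcal W=0$ between equal-mass measures on $\R$ forces equality of the measures (via the quantile representation $\mathcal Q_r$), and that ``almost every $\theta$'' upgrades to ``every $\theta$'' by continuity of the integrand (for your purposes a dense set of directions would in fact already suffice for the characteristic-function argument). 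The paper's counting argument buys elementarity --- no characteristic functions --- at the price of some informality. One small point of care in your last step: restricting to the open half-plane off the diagonal discards any diagonal points of $\Dg_1$ and $\Dg_2$, so strictly speaking you conclude $d_1(\Dg_1,\Dg_2)=0$ rather than equality of the raw multisets; since such points have zero cost under $d_1$, this is exactly what injectivity ``with respect to $d_1$'' requires.
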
	

\begin{proof}
Note that if the persistence diagrams have bounded cardinalities, Proposition~\ref{prop:inj} is an immediate consequence of Theorem~\ref{th:discr}.
One has that $\phi_{\rm SW}$ is continous since $d_{\rm SW}$ is stable (cf Theorem~\ref{th:stab}).
Now, let $\Dg_1,\Dg_2\in \SpfbD$. %and $\mu_i,\mu_j$ be the two corresponding signed empirical measures
such that  $d_{\rm SW}(\Dg_1,\Dg_2)=\|\phi_{\rm SW}(\Dg_1)-\phi_{\rm SW}(\Dg_2)\|=0$. 
We necessarily have ${\rm SW}(\Dg_1,\Dg_2)=0$.
Assume that $d_1(\Dg_1,\Dg_2)>0$. 
Then, there must be a point $p$ in $\Dg_1$ that is not in $\Dg_2$.
The Sliced Wasserstein distance being $0$, there must be, for every $\theta\in\mathbb{S}_1$, a point $q_\theta$ in $\Dg_2$ 
that has the same projection onto $L(\theta)$ as $p$: $\pi_\theta(q_\theta)=\pi_\theta(p)$, i.e. 
$q_\theta\in(\pi_\theta(p),p)$, the line defined by the pair $\pi_\theta(p),p$. 
All these lines $(\pi_\theta(p),p)$ intersect at $p\neq q_\theta$.
Thus, $q_{\theta_1}\neq q_{\theta_2}$ for any $\theta_1\neq \theta_2$, hence $\Dg_2$
must include an infinite number of points,
which is impossible. Thus, $d_1(\Dg_1,\Dg_2)=0$ and $\phi_{\rm SW}$ is injective. 

\end{proof}

In particular, $\kSW$ can be turned into a universal kernel by considering ${\rm exp}(\kSW)$  (cf Theorem~1 in~\cite{Kwitt15}).
This can be useful in a variety of tasks, including tests on distributions of persistence diagrams.
%Note also that $\kSW$ is not additive.

\subsection{Computation}\label{sec:comput}

\paragraph*{Approximate computation.} In practice, we propose to approximate $\kSW$ in $O(N{\rm log}(N))$
time using Algorithm~\ref{alg:aksw}. This algorithm first samples $M$ directions
in the half-circle $\mathbb{S}^+_1$; it then computes, for
each sample $\theta_i$ and for each persistence diagram $\Dg$, the scalar
products between the points of $\Dg$ and $\theta_i$, to sort them next in a
vector $V_{\theta_i}(\Dg)$. Finally, the $\ell_1$-norm between the vectors 
%gives the 
%1D Wasserstein distance $w_\infty^1$ along $L(\theta_i)$; 
%this value 
is averaged over the sampled directions:
%\begin{align*}
${\rm SW}_M(\Dg_1,\Dg_2)%&:=\frac 1M \sum_{i=1}^M w_\infty^1 (\Dg_1^{\theta_i},\Dg_2^{\theta_i})\nonumber \\
%&
=\frac 1M \sum_{i=1}^M \|V_{\theta_i}(\Dg_1)-V_{\theta_i}(\Dg_2)\|_1.$
%\nonumber
%\end{align*}
Note that one can easily adapt the proof of Lemma~\ref{lem:nd} to show that ${\rm SW}_M$ 
%this approximation 
is negative semi-definite
%:={\rm e}\left(-\frac{{\rm SW}_M(\Dg_1,\Dg_2)}{2\sigma^2}\right)$ 
by using the linearity of the sum. Hence, this approximation remains a kernel.
%If there are $M$ samples of $\mathbb{S}_1^+$ and 
If the two persistence diagrams have cardinalities bounded by $N$,
then the running time of this procedure is $O(MN{\rm log}(N))$. 
This approximation of $\kSW$  is useful since, as shown in
Section~\ref{sec:expe}, we have observed empirically that just a few directions are sufficient to get good classification accuracies.

\begin{algorithm}
\caption{Approximate computation of $\SW$}
\label{alg:aksw}
\begin{algorithmic}
\STATE {\bfseries Input:} $\Dg_1=\{p^1_1,\cdots,p^1_{N_1}\}$, $\Dg_2=\{p^2_1,\cdots,p^2_{N_2}\}, M$.
\STATE Add $\pi_\Delta(\Dg_1)$ to $\Dg_2$ and vice-versa.
\STATE Let $\SW=0$; $\theta=-\pi/2$; $s=\pi/M$;
\FOR{$i=1,\cdots,M$}
	\STATE Store the products $\langle p_k^1,\theta\rangle$ in an array $V_1$;
	\STATE Store the
 products $\langle p_k^2,\theta\rangle$ in an array $V_2$;
	\STATE Sort $V_1$ and $V_2$ in ascending order;
	\STATE $\SW=\SW+s \|V_1-V_2\|_1$;
	\STATE $\theta= \theta + s$;
\ENDFOR
\STATE {\bfseries Output:} $(1/\pi)\SW$;
\end{algorithmic}
\end{algorithm}  

\paragraph*{Exact computation.} A persistence diagram is said to be in {\em general position} if it has no triplet of aligned points.  
If the persistence diagrams have cardinalities bounded by $N$, then the exact kernel computation for persistence diagrams in general position can be done in $O(N^2{\rm log}(N))$ time with 
Algorithm~\ref{alg:ksw}. In practice, given $\Dg_1$ and $\Dg_2$, we slightly modify them with infinitesimally small random perturbations. The resulting persistence diagrams 
$\tilde{\Dg}_1$ and $\tilde{\Dg}_2$ are in general position and we can approximate
$\kSW(\Dg_1,\Dg_2)$ with $\kSW(\tilde{\Dg}_1,\tilde{\Dg}_2)$.	

\begin{algorithm}
\caption{Exact computation of $\SW$}\label{alg:ksw}
\KwIn{$\Dg_1=\{p^1_1,\cdots,p^1_{N_1}\}$ with $|\Dg_1|=N_1$, $\Dg_2=\{p^2_1,\cdots,p^2_{N_2}\}$ with $|\Dg_2|=N_2$}
Let $\Theta^1=[],\Theta^2=[],V_1=[],V_2=[]$, $B_1=[[]\ ...\ []]$, $B_2=[[]\ ...\ []]$, $\SW=0$;\\
\For{$i=1,\cdots,N_1$}{
    Add $p^2_{N_2+i}=\pi_\Delta(p^1_i)$ to $\Dg_2$;
  }
\For{$i=1,\cdots,N_2$}{
    Add $p^1_{N_1+i}=\pi_\Delta(p^2_i)$ to $\Dg_1$;
  }
\For{$i=1,2$}{
  \For{$j=1,\cdots,N_1+N_2-1$}{
    \For{$k=j+1,\cdots,N_1+N_2$}{
      Add $\angle \left[p^i_j-p^i_k\right]^\perp \in \left[-\frac{\pi}{2},\frac{\pi}{2}\right]$ to $\Theta^i$;
    }
  }
  Sort $A^i$ in ascending order;\\
  \For{$j=1,\cdots,N_1+N_2$}{
    Add $\langle p_j^i,[0,-1]\rangle$ to $V_i$;
  }
  Sort $V_i$ in ascending order;\\
  Let $f_i:p^i_j\mapsto{\rm position\ of\ }\left(p_j^i,-\frac{\pi}{2}\right){\rm\ in\ }V_i$; \\
  \For{$j=1,\cdots,(N_1+N_2)(N_1+N_2-1)/2$}{
    Let $k_1,k_2$ such that $\Theta^i[j]=\angle \left[p^i_{k_1}-p^i_{k_2}\right]^\perp$;\\
    Add $\left(p^i_{k_1},\Theta^i[j]\right)$ to $B_i\left[f_i(p^i_{k_1})\right]$; Add $\left(p^i_{k_2},\Theta^i[j]\right)$ to $B_i\left[f_i(p^i_{k_2})\right]$;\\
    Swap $f_i(p^i_{k_1})$ and $f_i(p^i_{k_2})$;
  }
  \For{$j=1,\cdots,N_1+N_2$}{
    Add $\left(p^i_j,\frac{\pi}{2}\right)$ to $B_i\left[f_i(p_j^i)\right];$
  }
}
\For{$i=1,\cdots,N_1+N_2$}{
  Let $k_1=0$, $k_2=0$;\\
  Let $\theta_m=-\frac{\pi}{2}$ and $\theta_M={\rm min}\{B_1[i][k_1]_2,B_2[i][k_2]_2\}$;\\
  \While{$\theta_m\neq \frac{\pi}{2}$}{
  $\SW = \SW+\|B_1[i][k_1]_1-B_2[i][k_2]_1\|_2\int_{0}^{\theta_M-\theta_m}{\rm cos}(\angle\left(B_1[i][k_1]_1-B_2[i][k_2]_1,\theta_m\right)+\theta){\rm d}\theta$;\\
  $\theta_m=\theta_M$;\\
  {\bf{\text if }} $\theta_M==B_1[i][k_1]_2$ {\bf {\text then }}$k_1=k_1+1$; {\bf{\text else }}$k_2=k_2+1$;\\
  $\theta_M={\rm min}\{B_1[i][k_1]_2,B_2[i][k_2]_2\}$;
  }
}
{\bf{\text return }}$\frac{1}{\pi}\SW$;
\end{algorithm}

\section{Experiments}
\label{sec:expe}

In this section, we compare $\kSW$ to $\kPSS$ and $\kPWG$ on 
several benchmark applications for which persistence diagrams have been proven useful. We compare these kernels in terms of classification 
accuracies and compuational cost. We review first our experimental setting, and review these tasks one by one.

\paragraph*{Experimental setting}
 All kernels are handled with the LIBSVM~\cite{Chang01} implementation of $C$-SVM, and results are averaged over 10 runs
on a 2.4GHz Intel Xeon E5530 Quad Core.
The cost factor $C$ is cross-validated in the following grid: $\{0.001, 0.01,0.1, 1,10,100,1000\}$.
Table~\ref{table:sum} summarizes the properties of the datasets we consider, namely number of labels, as well as training and test instances 
for each task. Figure~\ref{fig:taskltm} and \ref{fig:task2} illustrate how we use persistence diagrams to represent complex data.
We first describe the two baselines we considered, along with their parameterization, followed by our proposal.

\begin{table}[t]
\vskip 0.15in
\begin{center}
\begin{small}
\begin{sc}
\begin{tabular}{|l|c|c|c|}
\hline
 Task &        Training &                               Test &                       Labels \\
\hline
Orbit &        175 &                                    75 &                         5  \\
Texture &      240 &                                    240 &                        24  \\
Human &        415 &                                    1618 &                       8 \\
Airplane &     300 &                                    980 &                        4 \\
Ant &          364 &                                    1141 &                       5 \\
Bird &         257 &                                    832 &                        4 \\
FourLeg &      438 &                                    1097 &                       6 \\
Octopus &      334 &                                    1447 &                       2 \\
Fish &         304 &                                    905 &                        3 \\
\hline          
\end{tabular}
\end{sc}
\end{small}
\caption{\label{table:sum} Number of instances in the training set, the test set and number of labels.}
\end{center}
\vskip -0.1in
\end{table}

\begin{table}[t]
\vskip 0.15in
\begin{center}
\begin{small}
\begin{sc}

\begin{tabular}{|l|lll|}
\hline 
Task &         $\kPSS$ ($10^{-3}$)&    $\kPWG$ (1000) &                   $\kSW$ (6)                                  \\
\hline 
Orbit &        $63.6\pm1.2$ &          $77.7\pm1.2$ &                     ${\bf 83.7}\pm0.5$                           \\        
Texture &      ${\bf 98.8}\pm 0.0$ &   $95.8\pm0.0$ &                     $96.1\pm0.4$                          \\                                           
\hline 
Task &         $\kPSS$ &               $\kPWG$ &                          $\kSW$                           \\
\hline 
Human &        $68.5\pm2.0$ &          $64.2\pm1.2$ &                     ${\bf 74.0}\pm0.2 $ \\
Airplane &     $65.4\pm2.4$ &          $61.3\pm2.9$ &                     ${\bf 72.6}\pm0.2$  \\
Ant &          $86.3\pm1.0$ &          $87.4\pm0.5$ &                     ${\bf 92.3}\pm0.2$  \\
Bird &         $67.7\pm1.8$ &          ${\bf72.0}\pm1.2$ &                $67.0\pm0.5$  \\
FourLeg &      $67.0\pm2.5$ &          $64.0\pm0.6$ &                     ${\bf73.0}\pm0.4$ \\
Octopus &      $77.6\pm1.0$ &          $78.6\pm1.3$ &                     ${\bf85.2}\pm0.5$  \\
Fish &         $76.1\pm1.6$ &          ${\bf79.8}\pm0.5$ &                $75.0\pm0.4$ \\
\hline                                                            
\end{tabular}
\end{sc}
\end{small}

\caption{\label{table:Acc} Classification accuracies (\%) for the benchmark applications.}
\end{center}
\vskip -0.1in
\end{table}

\begin{table}[t]
\vskip 0.15in
\begin{center}
\begin{small}
\begin{sc}
\begin{tabular}{|l|llll|}
\hline 
Task &         $\kPSS$ ($10^{-3}$) &  $\kPWG$ (1000) &    $\kSW$ (6) &         \\
\hline 
Orbit &        $N(124\pm8.4)$ &       $N(144\pm14)$ &     $415\pm7.9+NC$ &                     \\        
Texture &      $N(165\pm27)$ &        $N(101\pm9.6)$ &    $482\pm68+NC$ &                      \\                                           
\hline 
Task &         $\kPSS$ &              $\kPWG$ &           $\kSW$ &           $\kSW$ (10) \\
\hline 
Human &        $N(29\pm0.3)$ &        $N(318\pm22)$ &     $2270\pm336+NC$ &  $107\pm14+NC$ \\
Airplane &     $N(0.8\pm0.03)$ &      $N(5.6\pm0.02)$ &   $44\pm5.4+NC$ &    $10\pm1.6+NC$ \\
Ant &          $N(1.7\pm0.01)$ &      $N(12\pm0.5)$ &     $92\pm2.8+NC$ &    $16\pm0.4+NC$ \\
Bird &         $N(0.5\pm0.01)$ &      $N(3.6\pm0.02)$ &   $27\pm1.6+NC$ &    $6.6\pm0.8+NC$ \\
FourLeg &      $N(10\pm0.07)$ &       $N(113\pm13)$ &     $604\pm25+NC$ &    $52\pm3.2+NC$ \\
Octopus &      $N(1.4\pm0.01)$ &      $N(11\pm0.8)$ &     $75\pm1.4+NC$ &    $14\pm2.1+NC$ \\
Fish &         $N(1.2\pm0.004)$ &     $N(9.6\pm0.03)$ &   $72\pm4.8+NC$ &    $12\pm1.1+NC$ \\
\hline                                                            
\end{tabular}

\end{sc}
\end{small}
\caption{\label{table:Gram} Gram matrices computation time (s) for the benchmark applications.
As explained in the text, $N$ represents the size of the set of possible parameters, and we have $N=13$ for $\kPSS$, 
$N=5\times5\times5=125$ for $\kPWG$ and $N=3\times5=15$ for $\kSW$. $C$ is a constant that depends only on the
training size. In all our applications, it is less than $0.1$s.}
\end{center}
\vskip -0.1in
\end{table}

\begin{figure*}[t] 
\centering
\includegraphics[width=0.98\textwidth]{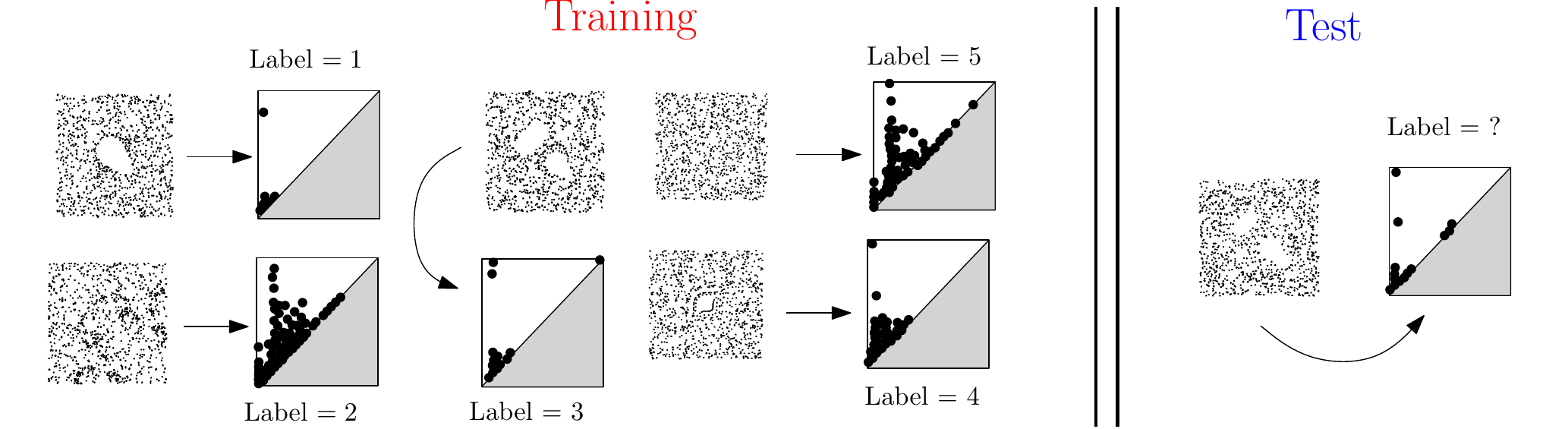}
\caption{\label{fig:taskltm} Sketch of the orbit recognition task. Each parameter $r$ in the 5 possible choices
leads to a specific behavior of the orbit. 
The goal is to recover parameters from the persistent homology of orbits in the test set.}
\end{figure*}

\begin{figure*}[t] 
\centering
\includegraphics[width=0.98\textwidth]{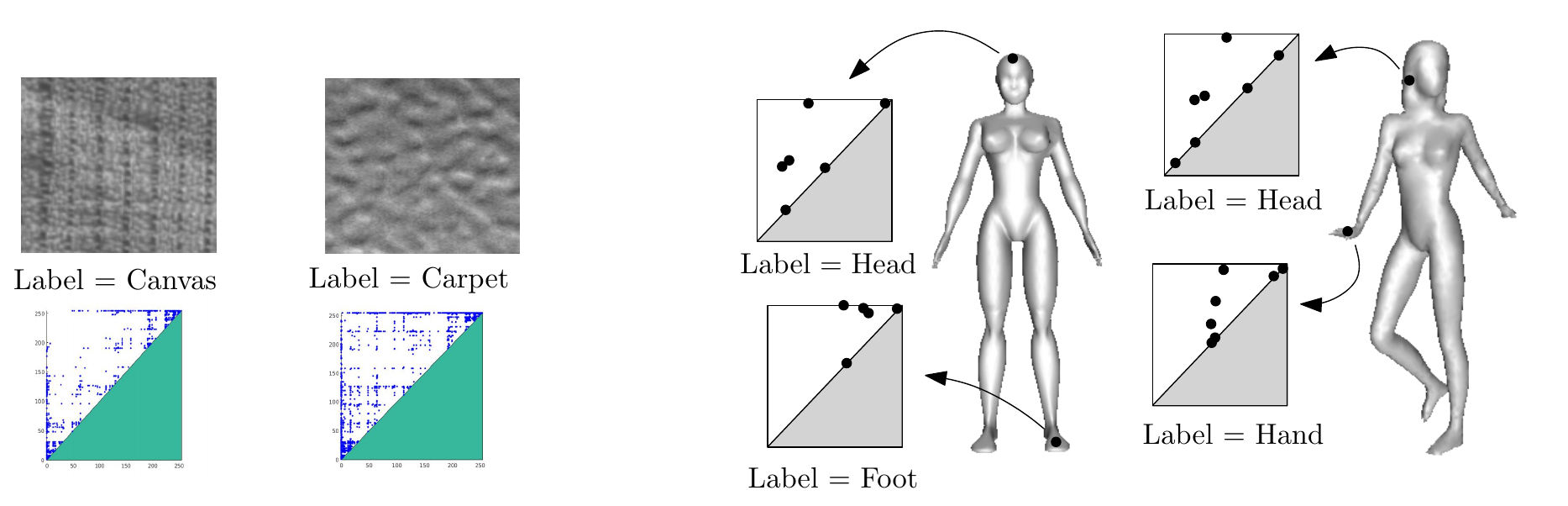}
\caption{\label{fig:task2} Examples of persistence diagrams computed on texture images from the \emph{OUTEX00000} dataset
and persistence diagrams computed from points on 3D shapes. One can see that corresponding points in different shapes have
similar persistence diagrams.}
\end{figure*}

\begin{figure}\centering
\includegraphics[width=7.5cm]{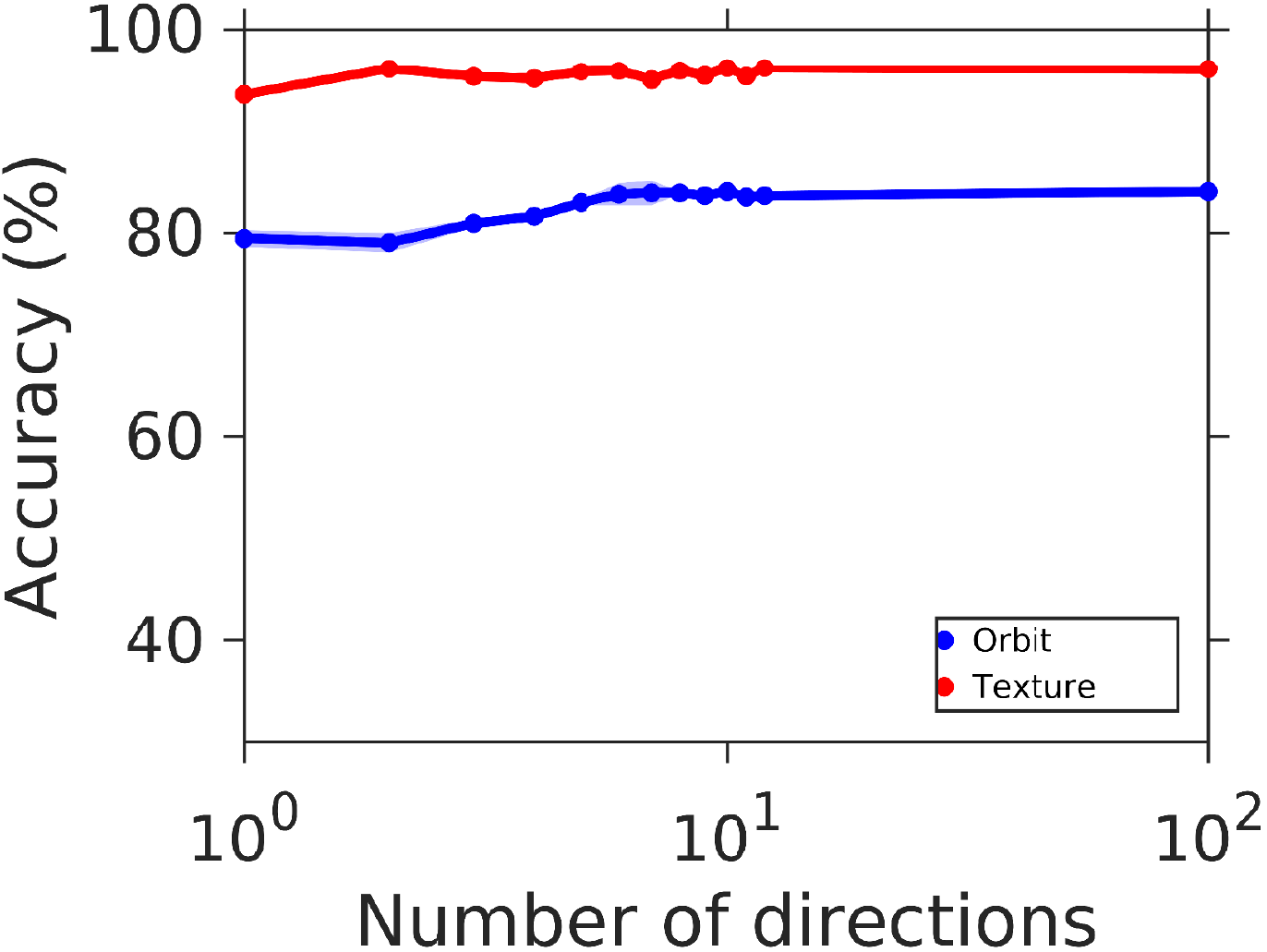}\ \ \includegraphics[width=7.5cm]{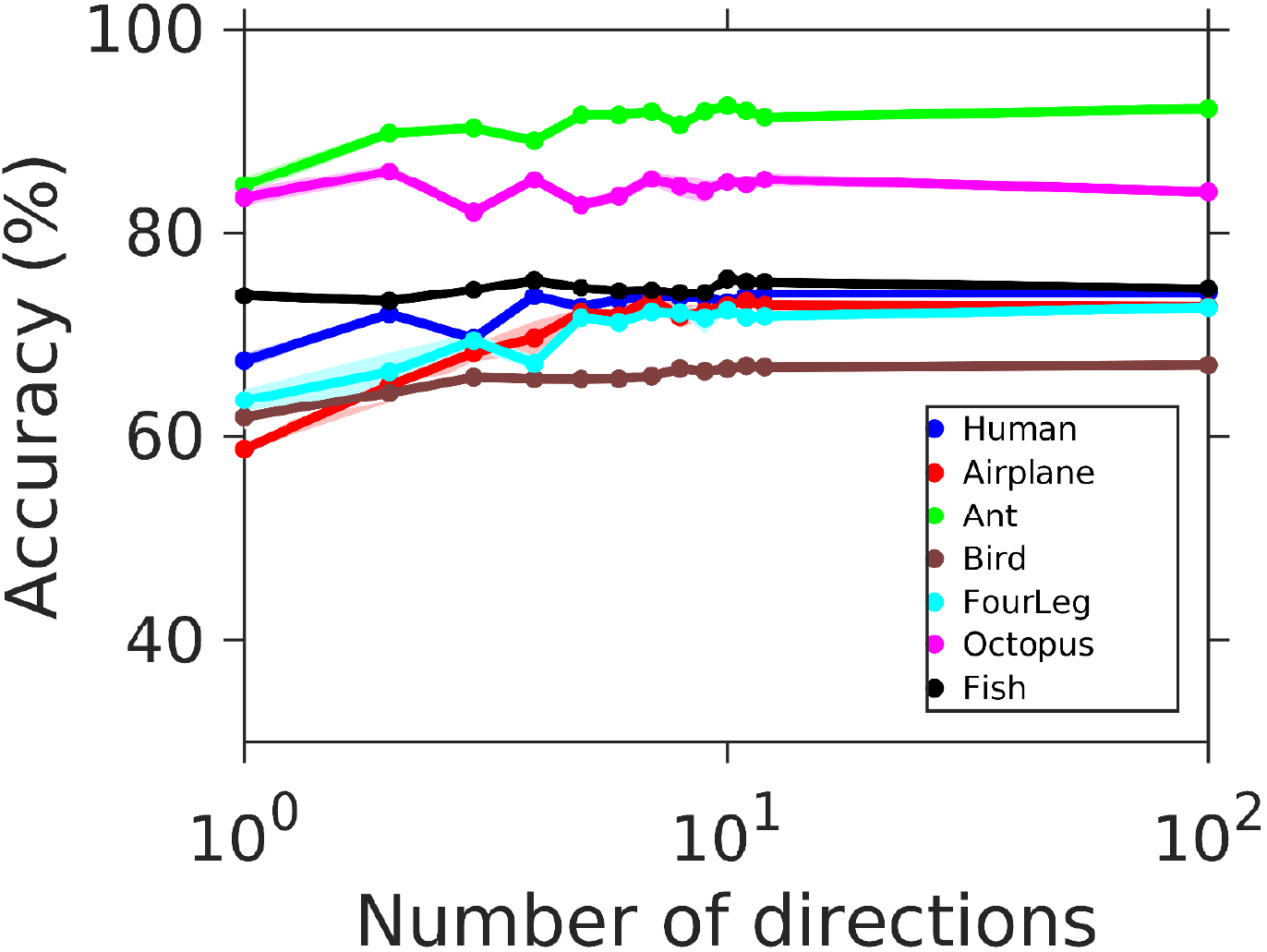} \\
\includegraphics[width=7.5cm]{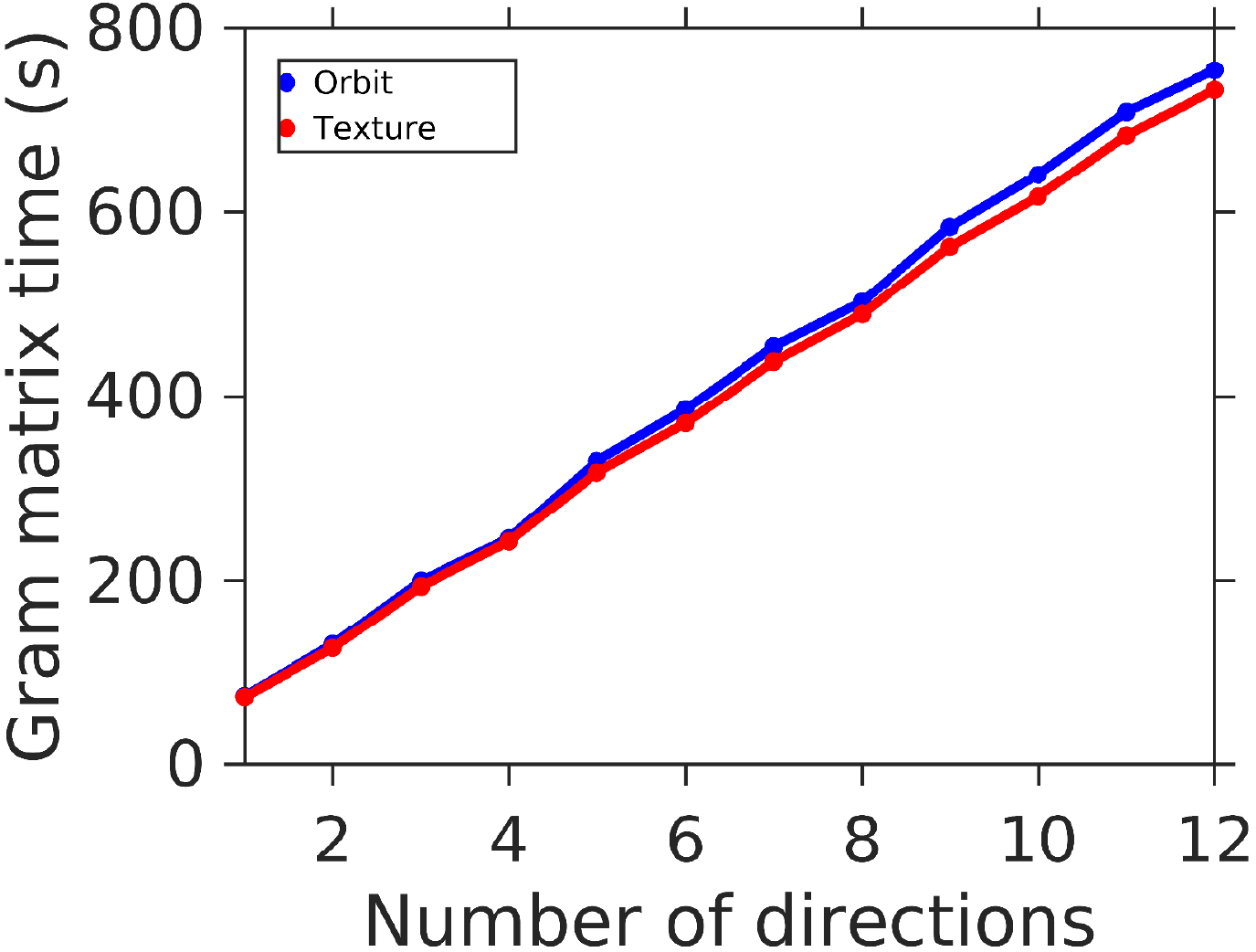} \ \ \includegraphics[width=7.5cm]{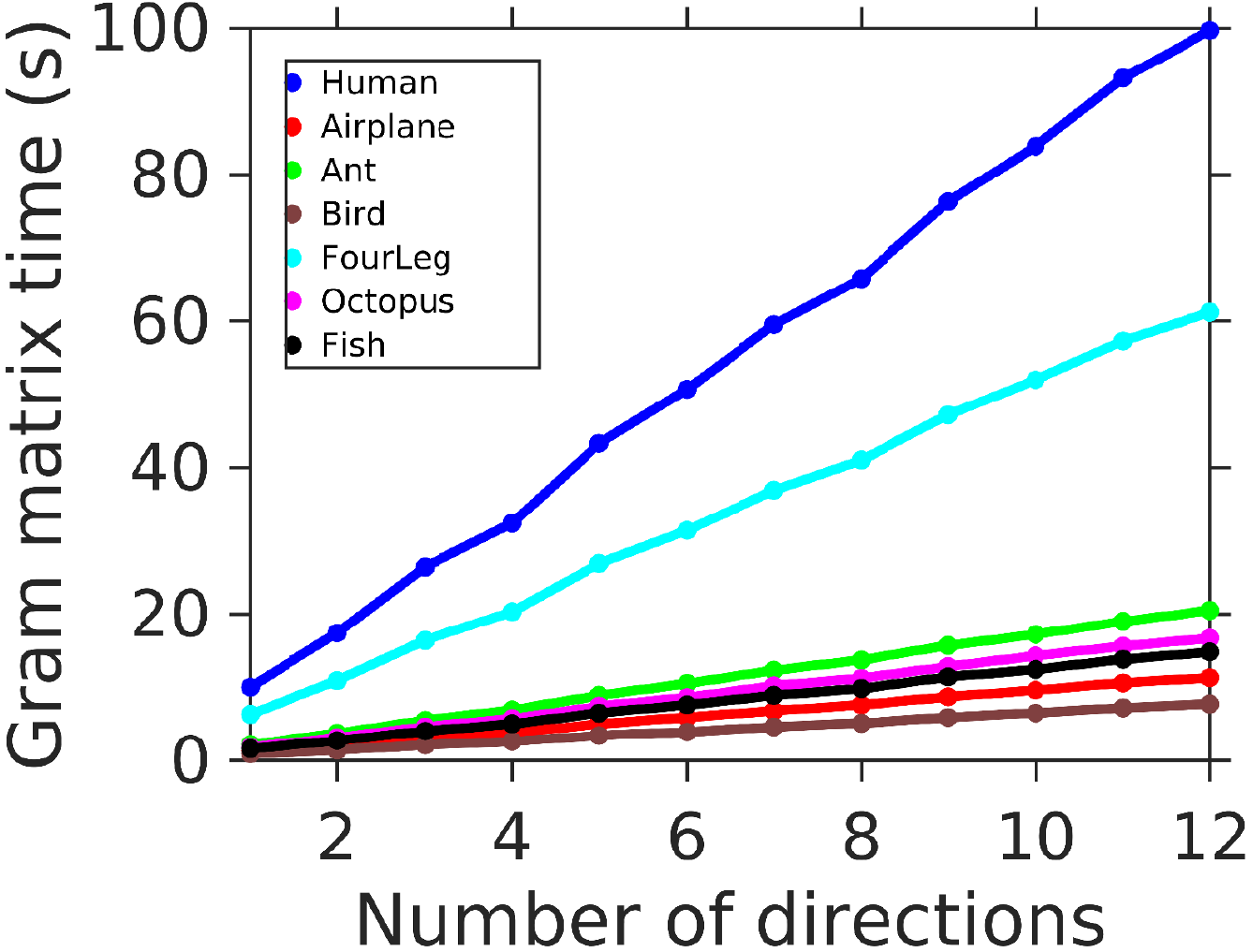} \\
\includegraphics[width=7.5cm]{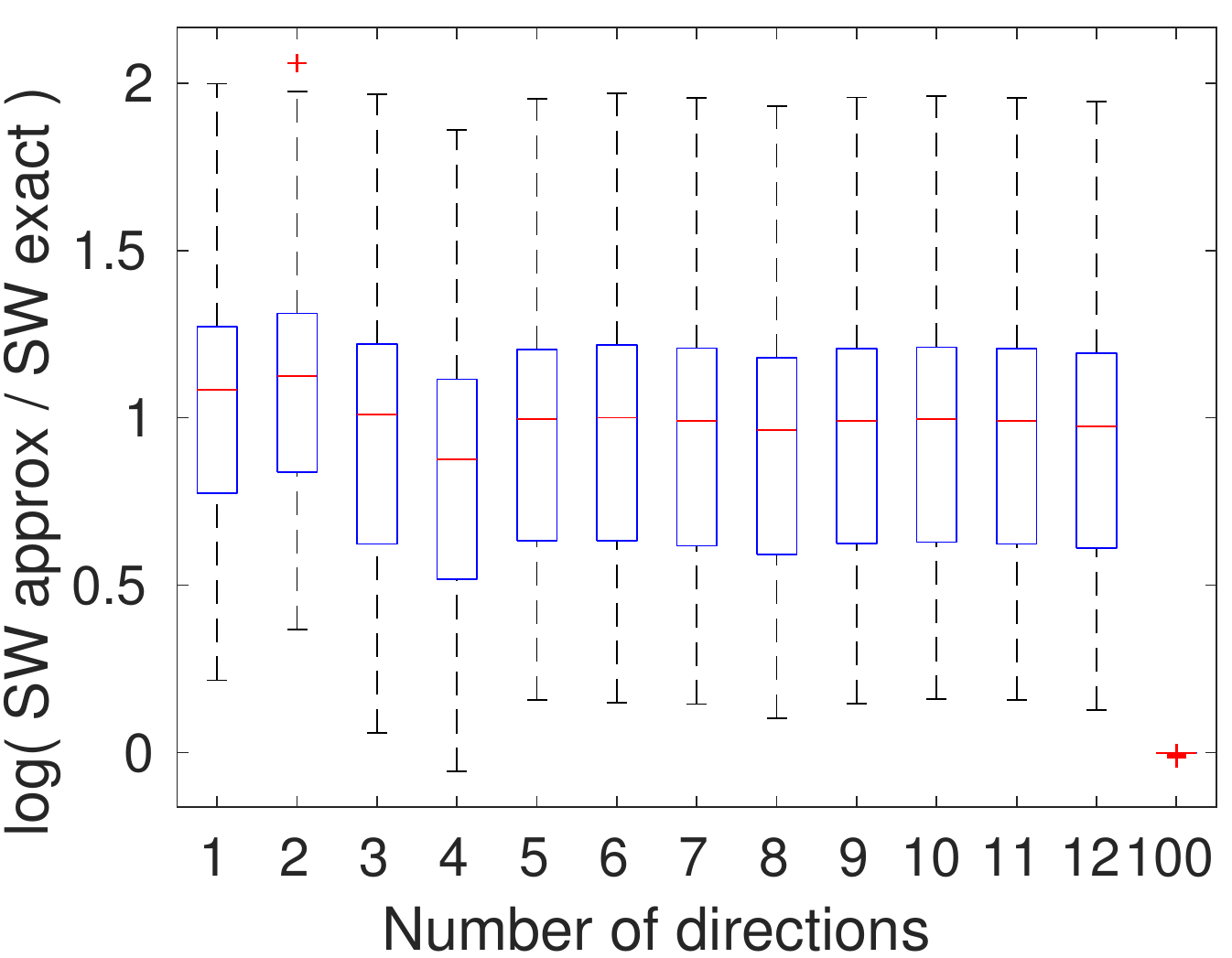}\ \ \includegraphics[width=7.5cm]{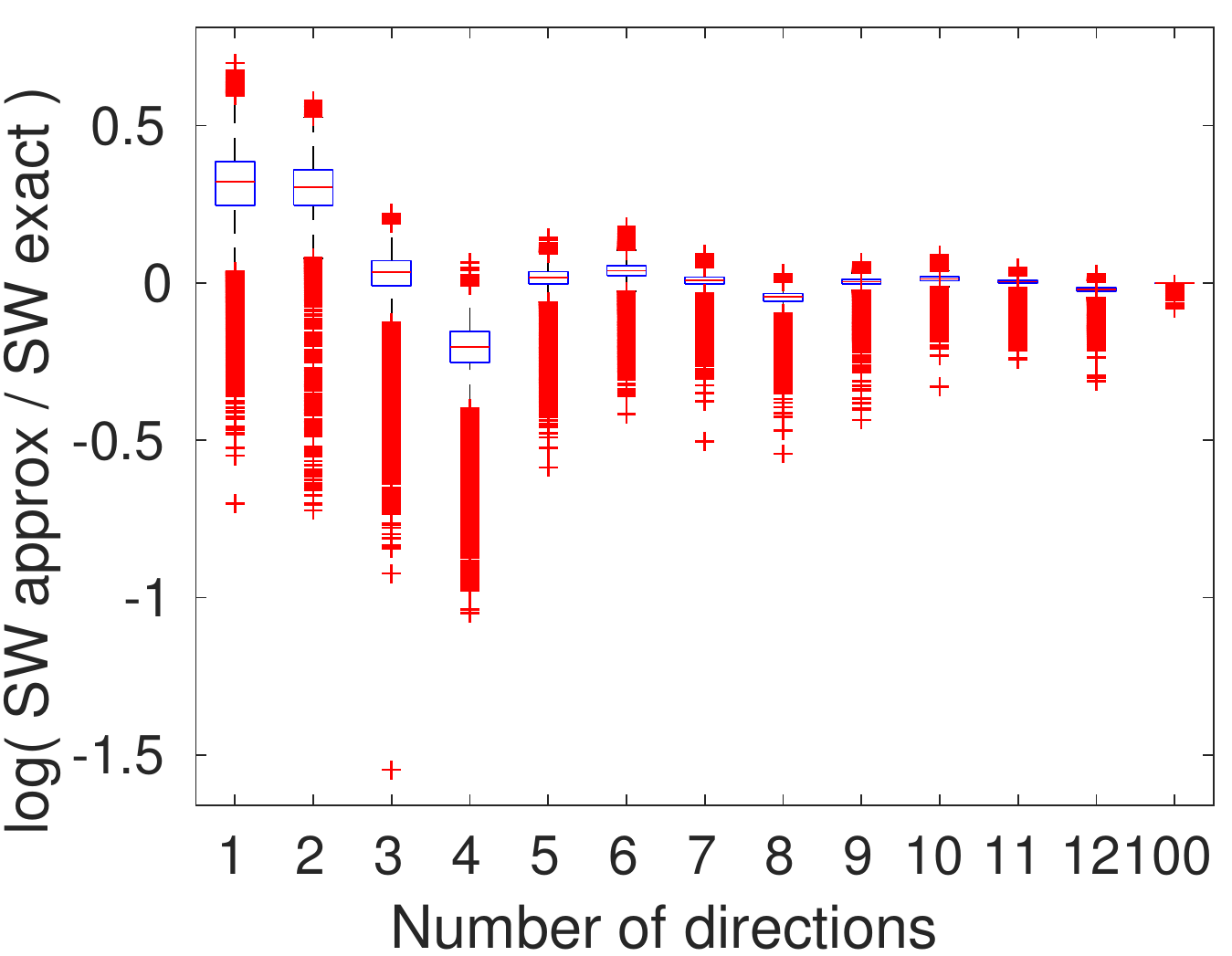}
\caption{\label{fig:plots} The first column corresponds to the orbit recognition and the texture classification while the second
column corresponds to 3D shape segmentation. 
On each column, the first row shows the dependence of the accuracy on the number of directions,
the second row shows the dependence of a single Gram matrix computation time, and the third row
shows the dependence of the ratio of the approximation of $\SW$ and the exact $\SW$.
Since the box plot of the ratio for orbit recognition is very similar to that of 3D shape segmentation,
we only give the box plot of texture classification in the first column. }
\end{figure}

\paragraph*{PSS.} The {\em Persistence Scale Space} kernel $\kPSS$~\cite{Reininghaus15} is
defined as the scalar product of the two solutions of the heat diffusion equation
with initial Dirac sources located at the points of the persistence diagram. It has the following closed form expression:
$$\kPSS(\Dg_1,\Dg_2)=\frac{1}{8\pi t}\sum_{p\in \Dg_1}\sum_{q\in \Dg_2} {\rm exp}\left(-\frac{\|p-q\|^2}{8t}\right) 
- {\rm exp}\left(-\frac{\|p-\bar q\|^2}{8t}\right),$$
where $\bar q=(y,x)$ is the symmetric of $q=(x,y)$ along the diagonal.
Since there is no clear heuristic on how to tune $t$, this parameter is chosen in the applications 
by ten-fold cross-validation with random 50\%-50\%
training-test splits and with the following set of $N_{\rm PSS}=13$ values: $0.001$, $0.005$, $0.01$, $0.05$, 
$0.1$, $0.5$, $1$, $5$, $10$, $50$, $100$, $500$ and $1000$. 

\paragraph*{PWG.} Let $K,p>0$ and $\Dg_1$ and $\Dg_2$ be two persistence diagrams.
Let $k_\rho$ be the Gaussian kernel with parameter $\rho >0$.
Let $\mathcal{H}_\rho$ be the RKHS associated to $k_\rho$.

Let $\mu_1=\sum_{x\in \Dg_1}{\rm arctan}(K{\rm pers}(x)^p)k_\rho(\cdot,x)\in\mathcal{H}_\rho$ 
be the kernel mean embedding of $\Dg_1$ weigthed by the diagonal distances.
Let $\mu_2$ be defined similarly. Let $\tau >0$.
The {\em Persistence Weighted Gaussian} kernel $\kPWG$~\cite{Kusano16, Kusano17} is
defined as the Gaussian kernel with parameter $\tau$ on $\mathcal{H}_\rho$:
$$\kPWG(\Dg_1,\Dg_2)={\rm exp}\left(-\frac{\|\mu_1-\mu_2\|_{\mathcal{H}_\rho}}{2\tau^2}\right).$$
The authors in~\cite{Kusano16} provide heuristics to compute $K$, $\rho$ and $\tau$
and give a rule of thumb to tune $p$. Hence, in the applications we select $p$ according to the rule of thumb, and
 we use ten-fold cross-validation with random 50\%-50\% training-test splits to chose $K$, $\rho$ and $\tau$. 
The ranges of possible values is obtained 
by multiplying the values computed with the heuristics
with the following range of $5$ factors: $0.01$, $0.1$, $1$, $10$ and $100$,
leading to $N_{\rm PWG}=5\times 5\times 5=125$ different sets of parameters. \\

\paragraph*{Parameters for $\kSW$.} The kernel we propose has only one parameter, the bandwidth $\sigma$ in Eq.~\ref{eq:kSW}, 
which we choose
using ten-fold cross-validation with random 50\%-50\% training-test splits. 
The range of possible values is obtained by computing the squareroot of the median, the first and the last deciles 
of all $\SW(\Dg_i,\Dg_j)$ in the training set,
then by multiplying these values by the following range of $5$ factors: $0.01$, $0.1$, $1$, $10$ and $100$, 
leading to $N_{\rm SW}=5\times 3= 15$ possible values.

\paragraph*{Parameter Tuning.} The bandwidth of $\kSW$ is, in practice, easier to tune than the parameters of its two competitors
when using grid search. Indeed, as is the case for all infinitely divisible kernels, the Gram matrix does not need to be 
recomputed for each choice of $\sigma$, since it only suffices to compute all the Sliced Wasserstein distances between persistence diagrams 
in the training set once. On the contrary, neither $\kPSS$ nor $\kPWG$ share this property, and require recomputations for each hyperparameter choice.
Note however that this improvement may no longer hold if one uses other methods to tune parameters.  
For instance, using $\kPWG$ without cross-validation is possible with the heuristics given by the authors in~\cite{Kusano16}, 
and leads to smaller training times, but also to worse accuracies.

\subsection{3D shape segmentation}
Our first task, whose goal is to produce point classifiers for 3D shapes, follows that presented in~\cite{Carriere15a}.

\paragraph*{Data.}  We use some categories of the mesh segmentation benchmark of Chen et al.~\cite{Chen09},
which contains 3D shapes classified in several categories (``airplane'', ``human'', ``ant''...).
For each category, our goal is to design a classifier that can assign, to each point in the shape,
a label that describes the relative location of that point in the shape. For instance, possible labels are, for the human category, 
``head'', ``torso'', ``arm''...
To train classifiers, we compute a persistence diagram per point using the geodesic distance function to this point---see~\cite{Carriere15a} for details.
We use 1-dimensional persistent homology (0-dimensional would not be informative since the shapes are connected,
leading to solely one point with coordinates $(0,+\infty)$ per persistence diagram). 
For each category, the training set contains one hundredth of the points of the first five 3D shapes,
and the test set contains one hundredth of the points of the remaining shapes in that category. Points in
training and test sets are evenly sampled. See Figure~\ref{fig:task2}.
Here, we focus on comparison between persistence diagrams, and not
on achieving state-of-the-art results. It has been proven that persistence diagrams bring complementary information
to classical descriptors in this task---see~\cite{Carriere15a}, 
hence reinforcing their discriminative power with appropriate kernels is of great interest.
Finally, since data points are in $\R^3$, we set the $p$ parameter of $\kPWG$ to $5$. 

\paragraph*{Results.} Classification accuracies are given in Table~\ref{table:Acc}.
For most categories, $\kSW$ outperforms competing kernels by a significant margin.
The variance of the results over the run is also less than that of its competitors. 
However, training times are not better in general. 
Hence, we also provide the results for an approximation of $\kSW$ with $10$ directions.
As one can see from Table~\ref{table:Acc} and from Figure~\ref{fig:plots}, this approximation leaves the accuracies almost unchanged, 
while the training times become comparable with the ones of the other competitors. Moreover, 
according to Figure~\ref{fig:plots}, using even less directions
would slightly decrease the accuracies, but still outperform the competitors performances,
while decreasing even more the training times.

\subsection{Orbit recognition}\label{sec:expeorbit}

In our second experiment, we use synthetized data.
The goal is to retrieve parameters of dynamical system orbits,
following an experiment proposed in~\cite{Adams17}.

\paragraph*{Data.} We study the {\em linked twist map}, a discrete dynamical system modeling
fluid flow. It was used in~\cite{Hertzsch07} to model flows in DNA microarrays.
Its orbits can be computed given a parameter $r>0$ and
initial positions $(x_0,y_0)\in[0,1]\times[0,1]$ as follows:

\[\left\{\begin{array}{l} x_{n+1} = x_n + ry_n(1-y_n)\ \ \ \ \ \ \ \ \ \ \ {\rm mod}\ 1 \\ y_{n+1} = y_n + rx_{n+1}(1-x_{n+1})\ \ \ {\rm mod}\ 1 \end{array}\right.\]

Depending on the values of $r$, the orbits may exhibit very different behaviors. For instance,
as one can see in Figure~\ref{fig:taskltm}, when $r$ is 3.5, there seems to be no interesting topological features
in the orbit, while voids form for $r$ parameters around 4.3.
Following~\cite{Adams17}, we use 5 different parameters $r=2.5,3.5,4,4.1,4.3$, that act as labels.
For each parameter, we generate 100 orbits with 1000 points and random initial positions. We then compute
the persistence diagrams of the distance functions to the point clouds with the GUDHI library~\cite{gudhi} and we use them (in all homological dimensions) 
to produce an orbit classifier
that predicts the parameter values, by training over a 70\%-30\% training-test split of the data.
Since data points are in $\R^2$, we set the $p$ parameter of $\kPWG$ to $4$. \\

\paragraph*{Results.} Since the persistence diagrams contain thousands of points, we use kernel approximations
to speed up the computation of the Gram matrices.
In order for the approximation error to be bounded by $10^{-3}$, 
we use an approximation of $\kSW$ with $6$ directions (as one can see from Figure~\ref{fig:plots}, 
this has a small impact on the accuracy), we approximate $\kPWG$ with $1000$ random Fourier features~\cite{Rahimi08},
and we approximate $\kPSS$ using Fast Gauss Transform~\cite{Morariu09}
with a normalized error of $10^{-10}$. 
One can see from Table~\ref{table:Acc} that the accuracy is increased a lot with $\kSW$.
Concerning training times, there is also a large improvement since we tune the parameters with grid search. 
Indeed, each Gram matrix needs not be recomputed for each parameter when using $\kSW$.

\subsection{Texture classification}

Our last experiment is inspired from~\cite{Reininghaus15} and~\cite{Li14}. 
We use the \emph{OUTEX00000} data base~\cite{Ojala02} for texture classification. 

\paragraph*{Data.} persistence diagrams are obtained for each texture image by computing first the sign component of CLBP descriptors~\cite{Guo10} 
with radius $R=1$ and $P=8$ neighbors for each image,
and then compute the persistent homology of this descriptor using the GUDHI library~\cite{gudhi}. 
See Figure~\ref{fig:task2}.
Note that, contrary to the experiment of~\cite{Reininghaus15}, we do not downsample the images to $32\times 32$ images,
but keep the original $128\times128$ images. 
Following~\cite{Reininghaus15}, we restrict the focus to 0-dimensional persistent homology.
We also use the first 50\%-50\% training-test split given in the database to produce classifiers. 
Since data points are in $\R^2$, we set the $p$ parameter of $k_{\rm PWG}$ to $4$. 

\paragraph*{Results} We use the same approximation procedure as in Section~\ref{sec:expeorbit}.
According to Figure~\ref{fig:plots}, even though the approximation of ${\rm SW}$ is rough,
this has again a small impact on the accuracy, while reducing the training time by a significant margin.
As one can see from Table~\ref{table:Acc}, 
using $\kPSS$ leads to almost state-of-the-art results~\cite{Ojala02, Guo10},
closely followed by the accuracies of $\kSW$ and $\kPWG$.
The best timing is given by $\kSW$, again because we use grid search. 
Hence, $\kSW$ almost achieves the best result, and its training time is
better than the ones of its competitors, due to the grid search parameter tuning.

\begin{figure}\centering
\includegraphics[width=8cm]{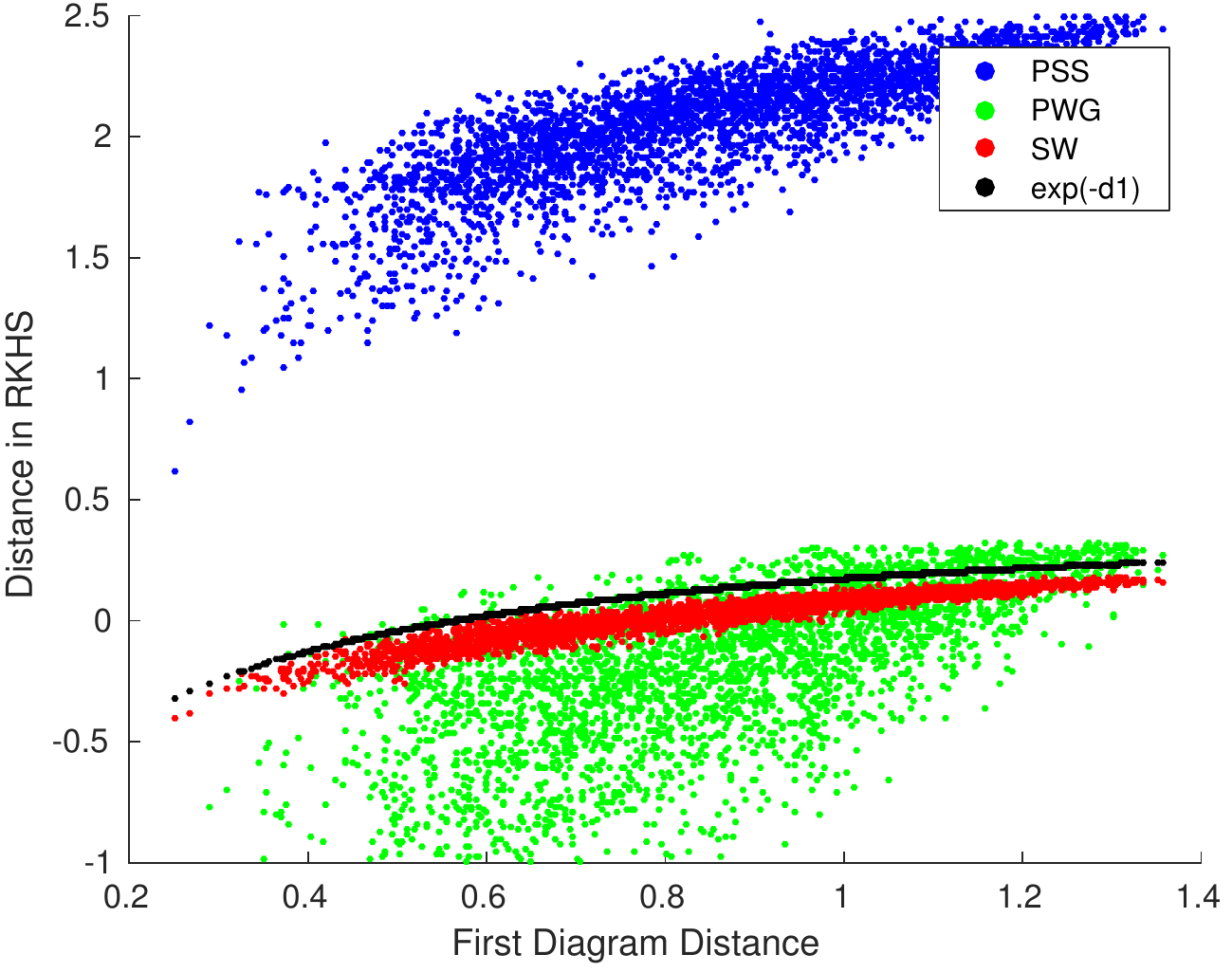}
\caption{\label{fig:Airplanedistances} We show how the metric $d_1$ is distorted.
Each point represents a pair of persistence diagrams and its abscissae is the first diagram distance between them. 
Depending on the point color, its ordinate is the logarithm of the distance between persistence diagrams in the RKHS induced by either
$\kPSS$ (blue points), $\kPWG$ (green points), $\kSW$ (red points) and a Gaussian kernel on $d_1$ (black points).  }
\end{figure}

\subsection{Metric Distortion.} 

To illustrate the equivalence theorem, we also show in Figure~\ref{fig:Airplanedistances} 
a scatter plot where each point represents the comparison of two persistence diagrams taken from the Airplane segmentation data set. 
Similar plots can be obtained with the other datasets considered here.
For all points, the x-axis quantifies the first diagram distance $d_1$ for that pair,
while the y-axis is the logarithm of the RKHS distance induced by either $\kSW$, $\kPSS$, $\kPWG$ 
or a Gaussian kernel directly applied to $d_1$, to obtain comparable quantities. 
We use the parameters given by the cross-validation procedure described above.
One can see that the distances induced by $\kSW$ are less spread than the others,
suggesting that the metric induced by $\kSW$ is more discriminative.
Moreover the distances given by $\kSW$ and the Gaussian kernel on $d_1$ exhibit the same behavior, 
suggesting that $\kSW$ is the best natural equivalent of a Gaussian kernel for persistence diagrams.

\section{Conclusion}

In this article, we introduce the {\em Sliced Wasserstein kernel},
a new kernel for persistence diagrams that is provably {\em equivalent} to the first
diagram distance between persistence diagrams. We provide fast algorithms to approximate it,
and show on several datasets substantial improvements in accuracy and training times 
(when tuning parameters is done with grid search) over competing kernels. 
A particularly appealing property of that kernel is that it is infinitely divisible, 
substantially facilitating the tuning of parameters through cross validation.

\paragraph{Acknowledgements.}
SO was supported by ERC grant Gudhi and by ANR project TopData. 
MC was supported by a {\em chaire de l'IDEX Paris Saclay}.

\bibliography{biblio}
\bibliographystyle{plain}

\end{document}